\newcommand\ubar[1]{%
	\underaccent{\bar}{#1}}
\DeclareMathOperator*{\argmin}{arg\,min}
\newcommand{\bp}{\begin{proof} \small }
\newcommand{\ep}{\end{proof} \normalsize}
\newcommand{\epx}{\end{proof} \small}
\newcommand{\bpa}{\begin{proofappx} \footnotesize }
\newcommand{\epa}{\end{proofappx} \small }
\newtheorem{proposition}{Proposition}
\newtheorem{lemma}{Lemma}
\newtheorem{assumption}{Assumption}
\newtheorem{definition}{Definition}
\newtheorem*{theorem*}{Theorem}
\newtheorem*{proposition*}{Proposition}
\newtheorem*{corollary*}{Corollary}
\newtheorem*{lemma*}{Lemma}
\newtheorem*{assumption*}{Assumption}
\newtheorem*{definition*}{Definition}
\newtheorem*{claim*}{Claim}
\newcommand{\bm}[1]{\mbox{\boldmath $#1$}}
\newcommand{\be}{\begin{equation}}
\newcommand{\ee}{\end{equation}}
\newcommand{\bs}{\begin{subequations}}
\newcommand{\es}{\end{subequations}}
\newcommand{\bq}{\begin{eqnarray}}
\newcommand{\eq}{\end{eqnarray}}
\newcommand{\bqn}{\begin{eqnarray*}}
\newcommand{\eqn}{\end{eqnarray*}}
\newcommand{\ba}{\left[ \begin{array}}
\newcommand{\ea}{\\ \end{array} \right]}
\newcommand{\ben}{\begin{enumerate}}
\newcommand{\een}{\end{enumerate}}
\def\p{{\boldsymbol{p}}}
\def\real{{\mathchoice%
{\hbox{\rm\setbox1=\hbox{I}\copy1\kern-.45\wd1 R}}
{\hbox{\rm\setbox1=\hbox{I}\copy1\kern-.45\wd1 R}}
{\hbox{\scriptsize\rm\setbox1=\hbox{I}\copy1\kern-.45\wd1 R}}
{\hbox{\scriptsize\rm\setbox1=\hbox{I}\copy1\kern-.45\wd1 R}}}}
\def\Zint{{\mathchoice{\setbox1=\hbox{\sf Z}\copy1\kern-.75\wd1\box1}
{\setbox1=\hbox{\sf Z}\copy1\kern-.75\wd1\box1}
{\setbox1=\hbox{\scriptsize\sf Z}\copy1\kern-.75\wd1\box1}
{\setbox1=\hbox{\scriptsize\sf Z}\copy1\kern-.75\wd1\box1}}}
\newcommand{\complex}{ \hbox{\rm C\kern-0.45em\rule[.07em]{.02em}{.58em}%
\kern 0.43em}}
\newcommand{\algmargin}{\the\ALG@thistlm}
\newlength{\whilewidth}
\algnewcommand{\parState}[1]{\State%
	\parbox[t]{\dimexpr\linewidth-\algmargin}{\strut #1\strut}}
\DeclareMathOperator*{\argmax}{arg\,max} 
\begin{document}
%
\title{Task Replication for Vehicular Cloud: Contextual Combinatorial Bandit with Delayed Feedback}
%
%
%
\author{Lixing Chen,~\IEEEmembership{Student Member,~IEEE},
        ~Jie Xu,~\IEEEmembership{Member,~IEEE}
\thanks{L. Chen and J. Xu are with the Department of Electrical and
	Computer Engineering, University of Miami. Email: \{lx.chen, jiexu\}@miami.edu.}}

\maketitle
\bstctlcite{IEEEexample:BSTcontrol}

\begin{abstract}
Vehicular Cloud Computing (VCC) is a new technological shift which exploits the computation and storage resources on vehicles for computational service provisioning. Spare on-board resources are pooled by a VCC operator, e.g. a roadside unit, to serve computational tasks using the vehicle-as-a-resource framework. This paper investigates timely service provisioning for deadline-constrained tasks in VCC systems by leveraging the task replication technique (i.e., allowing one task to be executed by vehicles). A learning-based algorithm, called DATE-V (Deadline-Aware Task rEplication for Vehicular Cloud), is proposed to address the special issues in VCC systems including uncertainty of vehicle movements, volatile vehicle members, and large vehicle population. The proposed algorithm is developed based on a novel contextual-combinatorial multi-armed bandit learning framework. DATE-V is ``contextual'' because it utilizes side information (context) of vehicles and tasks to infer the completion probability of a task replication under random vehicle movements. DATE-V is ``combinatorial'' because it replicates the received task and sends task replications to multiple vehicles to guarantee the service timeliness. When learning with multi-armed bandit, DATE-V also addresses the practical concern of delayed feedbacks caused by the task transmission/computational delay in using VCC. We rigorously prove that our learning algorithm achieves a sublinear regret bound compared to an oracle algorithm that knows the exact completion probability of any task replications. Simulations are carried out based on real-world vehicle movement traces and the results show that DATE-V significantly outperforms benchmark solutions.
\end{abstract}


%
\IEEEpeerreviewmaketitle

\section{Introduction}
Recent developments in vehicular applications such as autonomous driving, location-specific services and various forms of mobile infotainments are pushing car manufacturers to offer increasingly more advanced and sufficient on-board computing resources. In spite of the phenomenal growth of computing capacities on vehicles, it has been recently noticed that, most of the time, a huge array of on-board computing capacities are chronically underutilized \cite{yan2013security}. A series of recent papers \cite{yan2013security, arif2012datacenter, eltoweissy2010towards} have put forth the vision of Vehicular Cloud Computing (VCC), which pools underutilized vehicular resources (including computing resource, net connection and storage facilities) and rents them to vehicles on the road or other customers, similar to the way in which the resource of conventional cloud is provided but a nontrivial extension of conventional cloud. VCC is actually a paradigm shift from Vehicular Ad Hoc Network (VANET) which includes Vehicle-to-Vehicle (V2V), Vehicle-to-Infrastructure (V2I), and Vehicle-to-everything (V2X) communications. Though originally designed for emergency alerts and collision avoidance, VANET is now merging with intelligent transportation systems, which leads to the advent of intelligent vehicular networks that build ubiquitous vehicular communication environment using different protocols, e.g. dedicated short range communication (DSRC) \cite{kenney2011dedicated}, long-term-evolution-vehicle (LTE-V) \cite{chen2016lte} and 5G technologies \cite{wang2014cellular}. Armed with these components, vehicular networks are evolving into a connected group of smart vehicles.

\begin{figure}[tb]
	\centering
	\includegraphics[width = 0.65\linewidth]{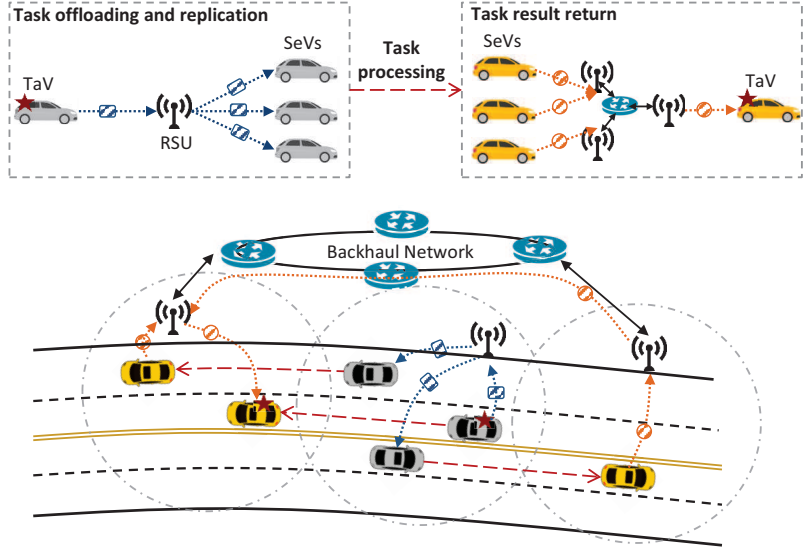}
	\caption{Illustration of task offloading and replication in VCC. \textit{The vehicles in gray denote the location of vehicles when task offloading/replication decisions are made; the vehicles in yellow denote the location of vehicles when the task results are sent back. The red dash arrows denote the traces of vehicles.}}
	\label{fig:illu}
	\vspace{-0.2 in}
\end{figure}
With the pooled vehicular resources, VCC operator uses Vehicle-as-a-Resource (VaaR) to provide computing services to end users (e.g., on-board passengers or pedestrians). Without loss of generality, we define the end user as on-board equipment on vehicles. In this case, the vehicles in a VCC system can be grouped into two categories: Server Vehicle (SeV) and Task Vehicle (TaV). SeVs have surplus computing resources and therefore is pooled by VCC to provide computing service. (On-board equipment of) TaVs have task requests that need to be offloaded for processing. The supply and demand of computing resource are matched by VCC operator, e.g., a roadside unit (RSU), which collects task requests from TaVs and assign them to SeVs. After the task computation, RSUs also collect the results and return them to TaVs. Fig. \ref{fig:illu} shows an illustration of the considered scenario. 

While VCC has offered a basic framework for computing service provisioning, task scheduling policies still need to be carefully designed to guarantee the timeliness of task processing due to the increasing demand on real-time response of vehicular applications. To capture this important feature, we assume the tasks in VCC are deadline-constrained (i.e., the task result must be returned to TaV before a hard deadline, otherwise it becomes useless). Therefore, how to ensure that the tasks can be completed before the deadline becomes the main concern. In this paper, we use the task replication technique to enhance the performance of VCC for deadline-constrained tasks. The key idea of task replication is allowing one task to be offloaded to multiple SeVs and the task is considered completed as long as one of the SeVs processes the task and feeds back the result before the deadline. In this way, the large number of vehicles can be exploited efficiently to provide satisfactory Quality of Service (QoS). 

However, optimally deciding task replications for VCC faces special challenges. First of all, the service delay of a task exhibits large uncertainty due to the unpredictable vehicle trace. For example, as shown in Fig. \ref{fig:illu}, the delay for task result return depends on the locations of TaV and SeVs which decide whether the inter-RSU data transmission is necessary. However, this location information is unknown/uncertain to RSUs when the task replication decisions are made. How to deal with the uncertainty in vehicle mobility will be the most critical issue for task replication in VCC systems. Second, VCC systems are extremely volatile where the vehicles connect and disconnect at any time and the role (TaV or SeV) of vehicles changes frequently. This is very different from existing task scheduling strategies for conventional cloud computing where available servers are fixed in advance. The task replication policy for VCC must be carefully designed to work efficiently with ever-changing system status. Third, the budget constraint is another critical issue that needs to be considered, otherwise, the VCC operator could simply assign a task to all available SeVs. Notice that whether a task is completed by a set of replications follows the \emph{At-Least-One} rule which demonstrates a feature of diminishing rewards. A good task replication policy should stop replicating smartly to ensure that replications are always beneficial for TaVs. 

In this paper, a novel learning algorithm, called DATE-V (Deadline-Aware Task rEplication for Vehicular cloud), is proposed for the replication of deadline-constrained tasks based on Multi-armed Bandit (MAB) framework. We design a novel MAB algorithm, contextual-combinatorial MAB (CC-MAB), to address special challenges in VCC systems. CC-MAB collects context (side information) of computational tasks, TaVs, and SeVs, learns over time the completion probability of task replications with the collected contexts, and exploits learned knowledge to select multiple SeVs for a task request. One salient feature of CC-MAB is that it is able to work with infinitely many vehicles and allow them to appear and disappear at any time. However, the sequential decision making in CC-MAB can be easily interrupted by the stochasticity of task arrival since the new tasks may arrive at RUS before the results of previous tasks are returned (formally termed as \emph{delayed feedback} in MAB problems). To better fit the practical application in VCC, CC-MAB is also extended to learn with the delayed feedback. The key contributions of this paper are summarized as follows: 

1) We first construct a RUS-assisted VCC system and formulate the deadline-constrained task replication problem as a submodular function maximization problem with cardinality constraint. A greedy algorithm is designed to give an oracle solution by assuming that the completion probabilities (i.e. the probability that the task result is returned to TaV before the deadline) of all possible task replications are known a priori.

2) The formulated task replication problem is next considered as a MAB problem. A learning algorithm, DATE-V, is developed within a novel MAB framework, contextual-combinatorial MAB (CC-MAB), which satisfies the special needs of VCC systems. The main advantage of CC-MAB is that it is able to learn efficiently with an infinitely large number of vehicles and ever-changing VCC systems. We analytically bound the loss due to learning, termed \emph{regret}, of DATE-V compared to the oracle benchmark that knows precisely the completion probability of task replications a priori. A regret bound is first provided with non-delayed feedback by assuming the rewards of task replications can be immediately observed, and then extended to the delayed feedback case. The regret upper bounds in both cases are sublinear, which imply that the proposed learning framework produces asymptotically optimal task replication decisions. 
	
3) We carry out extensive simulations using the real-world mobility trace of San Francisco Yellow Cabs. The results show that the proposed DATE-V significantly outperforms other benchmark algorithms.        

The rest of this paper is organized as follows. Section \ref{sec:related_work} reviews related work. Section \ref{sec:system_model} presents the system model and formulates the task replication problem. Section \ref{sec:date-v} designs the learning algorithm DATE-V and gives its performance guarantee. Section \ref{sec:simulation} evaluates the proposed algorithm via simulations, followed by the conclusion in Section \ref{sec:conclusion}.

\section{Related Work}\label{sec:related_work}
Recent efforts have been made to investigate the VCC system. Authors in \cite{eckhoff2011cooperative,liu2011pva} proposed to exploit the spare computing resource on parked cars for task offloading and cooperative sensing. Since the locations of parked cars do not change over time, the task offloading policies for parked cars are similar to those for static cloud servers which have been well investigated \cite{kumar2010cloud,chen2017energy}. There exist works considering moving vehicles in VCC systems \cite{zheng2015smdp,jiang2018task}, where the task scheduling process is assumed to be a Markov Decision Process (MDP). However, the MDP-based approaches usually suffer from the curse of dimensionality and hence cannot be applied to the situation when the number of vehicles is large. By contrast, we solve the task replication problem within a MAB framework, which is a general learning framework and does not rely on additional assumptions on traffic model or scheduling process. The most related work is probably \cite{sun2018learning} where the authors use the MAB framework to help make task offloading decision. While \cite{sun2018learning} only considers a task offloading problem, we consider both task offloading and task replication for VCC systems. More importantly, the MAB algorithm proposed in \cite{sun2018learning} only works with a finite arm set. By contrast, the proposed CC-MAB framework is able to learn with an infinitely large arm set which fits the VCC systems. 

A large body of work has focused on the task replication policy in data retrieval \cite{shah2016redundant} and multi-server data processing systems \cite{wang2015using}. These task replication techniques are usually leveraged to deal with the straggler problem where the service process has a heavy-tail distribution. For example, in \cite{shah2016redundant}, the optimal replication degree, i.e., the number of replicas, is investigated. However, it is not applicable to the VCC system since the servers (vehicles) are not always available and may change across the time.

MAB has been widely studied to address the critical tradeoff between exploration and exploitation in sequential decision making under uncertainty \cite{lai1985asymptotically}. The basic MAB concerns with learning the single optimal arm among a set of candidate arms of a priori unknown rewards by sequentially trying one arm each time and observing its realized noisy reward \cite{auer2002finite}. Combinatorial bandits extends the basic MAB by allowing multiple-play each time \cite{gai2012combinatorial} and contextual bandits extends the basic MAB by considering the context-dependent reward functions \cite{li2010contextual}. While both combinatorial bandits and contextual bandits problems are already much more difficult than the basic MAB problem, this paper tackles the even more difficult contextual-combinatorial MAB problem. Recently, a few other works \cite{li2016contextual,muller2017context} also started to study CC-MAB problems. However, these works make strong assumptions that are not suitable for VCC systems. For instance, \cite{qin2014contextual,li2016contextual} assume that the reward of an individual action is a linear function of the contexts and \cite{muller2017context} assumes a fixed arm set. In our problem, the reward of a replication is unlikely to be a linear function of contexts and, more importantly, the arms may appear and disappear across the time. Delayed feedback \cite{joulani2013online} is another important branch in the MAB family. It concerns with a practical issue that the rewards of arms are not immediately available after the arms are pulled. This issue is also encountered when applying MAB in VCC systems since the transmission/computation delays are incurred to complete the tasks. However, most existing works \cite{joulani2013online, desautels2014parallelizing} on MAB with delayed feedback assume a fixed arm set and hence cannot be applied in our problem.

\section{System Model}\label{sec:system_model}
\subsection{Vehicle Cloud and System Overview}
We consider a Vehicle Cloud Computing (VCC) system where a set of Road Side Units (RSUs) are deployed along the main streets based on certain deployment rules, e.g. improving the overall network performance or maximizing deployment distance \cite{wang2016delivery}. The main functionality of a RSU is receiving tasks from TaVs and dispatching the tasks to appropriate SeVs such that the task results can be returned to TaVs before the deadline. Consider an arbitrary RSU, let $\{1,2,\dots,T\}$ be the sequence of TaVs' tasks received by the RSU. The procedures for completing a task are as follows:

1) \textbf{TaV-to-RSU (T2R) task offloading}: when a TaV issues a task request, it connects to a nearby RSU and offload its task via the wireless connection. The data transmission between TaV and RSU can be easily achieved by existing V2I communication techniques, e.g. DSRC, LTE, and 5G. 

2) \textbf{RSU-to-SeV (R2S) task assignment}: RSU identifies the available SeVs $\mathcal{V}^t$ based on the RSU-to-SeV uplink SINR condition. To ensure the successful task transmission, the SINR of SeV $v$ should be greater than a threshold $\beta$:
\begin{align}\label{eq:SINR}
	\text{SINR}^{(\text{RS})}_v = \frac{P^{\text{R}}(l^{(\text{RS})}_v)^{-\alpha}}{\sigma^2 + I^{\text{RS}}} \geq \beta
\end{align}	
where $P^{\text{R}}$ is the transmission power of RSU, $l^{(\text{RS})}_n$ denotes the distance between RSU and SeV $v$, $\alpha$ is the signal power decay, $\sigma^2$ is the background noise on the frequency channel, $I^{(\text{RS})}$ is the interference, and $\beta$ is the threshold which depends on the wireless network design ($\beta = 0.15$ is recommended for vehicle communication \cite{andrews2009maximizing}).
	
3) \textbf{Task processing}: Once a SeV receives a task, it processes the task with the computing resource on the vehicle. To simplify the system model, it is assumed that the SeV has immediate computing resources to allocate for task processing. Queuing of task at SeVs are not considered in this paper. 
	
4) \textbf{Return results}: After the task is processed, the SeV needs to send the task result back to the TaV. At this time, we denote the RSU associated with SeV as S-RSU and the RSU associated with TaV as T-RSU. SeV first transmits the task result to S-RSU via wireless connection and then S-RSU transmits the results to T-RSU through the backhaul network. When T-RSU receives the result, it sends the result to the TaV. Note that if S-RSU and T-RSU turn out to be the same RSU, the transmission between S-RSU and T-RSU is not performed.

 In our paper, we consider that each task $t$ has a hard deadline requirement $L^t$. A task is completed if the TaV receives the task result before the deadline, otherwise it is failed. The probability of task completion is subject to many uncertain factors, e.g. wireless channel condition and the trace of moving vehicles. Note that whether a SeV can return the task result before the deadline is unknown to the RSU. To increase the completion probability of a task, we allow RSU to send a TaV's task to multiple SeVs. We call each TaV-SeV pair a \emph{replication} of the task $t$ and write all possible replications also as $\mathcal{V}^t$ with slight abuse of notation. 

\subsection{Service Delay and Replication Quality}
Each task $t$ is denoted by a tuple $(x^t, y^t, w^t, b^t, L^t)$ where $x^t$ (in bits) denotes the size of task input data, $y^t$ (in bits) is the size of task result, $w^t$ is the numbers of CPU cycles required to complete the task, $b^t$ is the budget (maximum number of replications) for the task, and $L^t$ is the deadline. Service delay is incurred to complete the task. Let $d^t_v$ denote the service delay of replication $v$. It consists of the following parts:

\subsubsection{T2R task transmission delay} the transmission rate for offloading task $t$ from TaV to RSU can be written as $r^{(\text{TR}),t} = W\log_2(1+\text{SINR}^{(\text{TR})})$. Therefore, the T2R transmission delay can be written as $d^{(\text{TR}),t} = x^t/r^{(\text{TR}),t}$. Note that $d^{(\text{TR}),t}$  is actually revealed to RSU by observing the timestamps of data packets defined by Network Time Protocol (NTP).  
	
\subsubsection{R2S task assignment delay} For simplicity of system model, we assume transmission between RSU and SeV is operated with a fixed transmission rate $r^{(\text{RS}),t}$ by leveraging power/spectrum allocation strategies \cite{liang2017vehicular}. However, our algorithm is compatible with other R2S transmission models which do not give transmission rate exactly. This is because the service delay is modeled as a gray box to the learning algorithm (will be discussed later in this section). 

\subsubsection{Computation delay} let $f^t_v$ be the available CPU frequency allocated by SeV $v$ for task $t$. Then, the computation delay can be simply obtained by $d^{(\text{C}),t}_v = w^t/f^t_v $. Here, we assume that each SeV reports $f^t_v$ to RSU in advance. Again, our algorithm is also able to work when $f^t_v$ is unknown a priori (will be discussed later).

\subsubsection{Result return delay} let $d^{(\text{ST}),t}_v$ be the result return delay of replication $v$. It consists of the following parts: 1) the transmission delay between SeV $v$ and R-SRU $d^{(\text{SR}),t}_v = y^t/r^{(\text{SR}),t}_v$, where $r^{(\text{SR}),t}_v = W\log_2(1+\text{SINR}^{(\text{SR})}_v)$ is the transmission rate. 2) the backhaul transmission delay between S-RSU and T-RSU $d^{(\text{RR}),t}_v = y^tg^t_v +h^t_v$, where $g^t_v$ is the backhaul transmission rate and $h^t_v$ be the round trip time when sending back the result of replication $v$. If S-RUS and T-RSU are the same, then $d^{(\text{RR}),t}_v = 0$; 3) the delay for transmitting results between RSU and TaV $d^{({\text{RT}}),t}_v = y^t/r^{(\text{RT}),t}_v$ where $r^{(\text{RT}),t}_v$ is the fixed transmission rate operated by RSU for RSU-to-Vehicle data transmission. The result return delay of replication $v$ can be obtained as $d^{(\text{ST}),t}_v = d^{(\text{SR}),t}_v + d^{(\text{RR}),t}_v + d^{({\text{RT}}),t}_v$.
	
Therefore, the total service delay of replication $v$ is $d^t_v = d^{(\text{TR}),t} + d^{(\text{RS}),t} + d^{(\text{C}),t}_v + d^{(\text{ST}),t}_v$. One can immediately see that the delay model for a replication is a ``gray box'' to the RSU operators: while some parts of service delay are revealed to the RSU (e.g. $d^{(\text{TR}),t}$, $d^{(\text{RS}),t}$, and $d^{(\text{C}),t}_v$), the result return delay $d^{({\text{RT}}),t}$ is unknown to the RSU due to the uncertainty in vehicle movement and backhaul network condition. If TaV receives the result of replication $v$ before the deadline, i.e. $d^t_v \leq L^t$, then replication $v$ is considered as successfully executed. We define \emph{quality} of replication $v$ as $q^t_v = \textbf{1}\{d^t_v \leq L^t\}$, where $\textbf{1}\{\cdot\}$ is the indicator function. Let $\mu^t_v = \mathbb{E}\{q^t_v\} = \Pr\{d^t_v \leq L^t\}$ be the expected quality of replication $v$. Since $d^{(\text{TR}),t}$, $d^{(\text{RS}),t}$, and $d^{(\text{C}),t}_v$ are known to the RSU, the expected quality of replication $v$ can be written as $\mu^t_v = \Pr\{d^{(\text{ST}),t}_v \leq L^t - d^{(\text{TR}),t} - d^{(\text{RS}),t}_v - d^{(\text{C}),t}_v\}$. 
 
\textbf{Remark}: The service delay ``gray box'' can be changed to other configurations, e.g., RSUs do not use fixed transmission rate or the SeVs do not report the CPU frequency allocated for a task. In this case, $d^{(\text{RS}),t}$ and $d^{(\text{C}),t}_v$ become unknown, and the expected quality of replication $v$ can be written as $\mu^t_v = \Pr\{d^{(\text{ST}),t}_v + d^{(\text{RS}),t}_v + d^{(\text{C})}_v \leq L^t - d^{(\text{TR}),t}\}$. Our method is able to work even if the service delay is a ``black box''.     
 
\subsection{Problem Formulation}
For each task $t$, the RSU picks a subset of replications from all available replications $\mathcal{V}^t$ for task $t$, and we call the subset $\mathcal{A}^t \subseteq \mathcal{V}^t$ the replication decision for task $t$. The reward $r(\mathcal{A}^t)$ achieved by the selected replications in $\mathcal{A}^t$ is defined as:
\begin{equation}\label{eq:reward_func_A}
	r(\mathcal{A}^t) = \left\{
	\begin{split}
	& 1 - \eta \cdot  |\mathcal{A}^t|, ~~\text{if}~~\exists v \in\mathcal{A}^t, q^t_v = 1\\
	& - \eta \cdot  |\mathcal{A}^t|, ~~\text{if}~~\forall v \in\mathcal{A}^t, q^t_v = 0
	\end{split}\right.
\end{equation}
The term $\eta \cdot |\mathcal{A}^t|$ in \eqref{eq:reward_func_A} captures the cost of replication decision $\mathcal{A}^t$, where $\eta$ is the unit cost for one replication. By applying the \emph{At-Least-One} probabilistic rule, we can write the expected reward of replication decision $u(\bm{\mu}^t, \mathcal{A}^t) = \mathbb{E}[r(\mathcal{A}^t)]$ as: 
\begin{align} \label{eq:reward_func}
u(\bm{\mu}^t,\mathcal{A}^t) = \left(1 - \prod\nolimits_{v\in \mathcal{A}^t}(1-\mu^t_v)\right) - \eta \cdot |\mathcal{A}^t| 
\end{align}
where the first term in \eqref{eq:reward_func} denotes the probability that task $t$ is completed by at least one replication in $\mathcal{A}^t$. Consider an arbitrary sequence of tasks $\{1,2,\dots,T\}$ that arrive at a RSU. The RSU makes task replication decisions $\mathcal{A}^t$ for each task $t$ which aims to maximize the expected cumulative rewards:
 \begin{subequations}
	\begin{align}
	\textbf{P1:}~~~& \max\nolimits_{\{\mathcal{A}^t\}^T_{t=1}} ~~ \sum\nolimits_{t=1}^T u(\bm{\mu}^t,\mathcal{A}^t) \\
	\text{s.t.} ~~&\mathcal{A}^t\subseteq \mathcal{V}^t, \forall t = 1, 2, \dots, T \\
	&|\mathcal{A}^t| \leq b^t, \forall t = 1, 2, \dots, T \label{const:budget}
	\end{align}
\end{subequations}
where the constraint \eqref{const:budget} indicates that the number of replication in $\mathcal{A}^t$ should not exceed the budget $b^t$ of task $t$. The problem in \textbf{P1} can be decouple into $T$ independent subproblems, one for each task $t$ as follows: 
\begin{align}\label{op:perslot}
\textbf{P2:}~~~\max_{\mathcal{A}^t\subseteq \mathcal{V}^t,|\mathcal{A}^t| \leq b^t} ~~ 1 - \prod_{v\in \mathcal{A}^t}(1-\mu^t_v) - \eta \cdot |\mathcal{A}^t|
\end{align}
The objective in \textbf{P2} exhibits a property of \emph{submodularity}: the total reward achieved by the selected replication is not a simple sum of individual qualities but demonstrates a feature of diminishing returns determined by the \emph{At-Least-One} rule. The formal definition of submodularity is given below.
\begin{definition}[Submodularity]
	Let $\mathcal{V}$ be the universe replication set. For all possible subsets $\mathcal{A} \subseteq \mathcal{B} \subseteq \mathcal{V}$ and any replication $v \notin\mathcal{B}$, if a reward function $u(\cdot,\cdot)$ satisfies $u(\bm{\mu}^t,\{v\}\cup\mathcal{A}) - u(\bm{\mu}^t,\mathcal{A}) \geq u(\bm{\mu}^t,\{v\}\cup\mathcal{B}) - u(\bm{\mu}^t,\mathcal{B}). $
	Then, $u(\cdot,\cdot)$ is submodular.
\end{definition}
We for now assume that there was an omniscient oracle that knows the expected quality of each possible replication $\mu^t_v, v \in \mathcal{V}^t$. Then \textbf{P2} becomes a submodular function maximization problem with cardinality constraint, which is a well-studied topic and can be efficiently solved by the greedy algorithm presented in Algorithm \ref{alg:greedy}. To facilitate the solution presentation, we define the marginal reward of replication $v$:
\begin{definition}[Marginal Reward]
	Consider a task $t$, let $\mathcal{A}^t\subseteq \mathcal{V}^t$ be a subset of replications and let $v^\prime$ be an available replication. Define the \emph{marginal reward} of adding replication $v^\prime$ to $\mathcal{A}^t$ as $\Delta(\bm{\mu},\{v^\prime\} | \mathcal{A}^t) = u(\bm{\mu}, \{v^\prime\}\cup\mathcal{A}^t) - u(\bm{\mu},\mathcal{A}^t)$.
\end{definition} 

\begin{algorithm}[htb]
	\caption{Greedy Algorithm}
	\begin{algorithmic}[1] 
		\State \textbf{Input}: $\mathcal{V}^t$, $b^t$, $\mu^t_v, \forall v \in \mathcal{V}^t$.
		\State \textbf{Initialization}: $\mathcal{A}_0 \gets \emptyset, k \gets 0$;
		\While {$k \leq b^t$}:
		\State $k = k + 1$;
		\State select $v_k = \argmax_{v\in\mathcal{V}^t\backslash\mathcal{A}_{k-1}} \Delta(\bm{\mu}^t, \{v\} | \mathcal{A}_{k-1})$;
		\If {$\Delta(\bm{\mu}^t, \{v_k\} | \mathcal{A}_{k-1})>0$:} $\mathcal{A}_{k} = \mathcal{A}_{k-1} \cup \{v_k\}$;
		\Else: stop;
		\EndIf 
		\EndWhile
		\State \textbf{return} $\mathcal{A}_{k}$
	\end{algorithmic}\label{alg:greedy} 
\end{algorithm}

The greedy algorithm works in an iterative manner. In each iteration $k$, a replication $v_k\in\mathcal{V}^t\backslash\mathcal{A}_{k-1}$ is selected such that the marginal reward is maximized given $\mathcal{A}_{k-1} = \cup^{k-1}_{i=1}v_i$. In the general case, the greedy algorithm guarantees no less than $(1-1/e)$ of the optimum in only polynomial runtime. However, for our problem in \textbf{P2}, the greedy algorithm actually gives the optimal solution, which is proved in Proposition \ref{prop:greedy_opt}.
\begin{proposition} [Optimality of Greedy Algorithm]\label{prop:greedy_opt}
	For an arbitrary task $t$, the task replication decision to the $t$-th subproblem derived by the greedy algorithm is optimal.
\end{proposition}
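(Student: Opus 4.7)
The plan is to exploit the very explicit form of the marginal reward for this specific submodular objective, which turns out to separate multiplicatively into a ``per-replication'' factor $\mu^t_v$ and a ``state-of-$\mathcal{A}$'' factor that does not depend on which $v$ is being considered. This will let me reduce the greedy algorithm to simply sorting the replications by $\mu^t_v$, and then show both that top-$k$ is optimal for every fixed $k$ and that the stopping criterion selects the optimal $k$.

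First I would compute the marginal reward directly from \eqref{eq:reward_func}:
\begin{equation*}
\Delta(\bm{\mu}^t,\{v\}\mid \mathcal{A}) = \mu^t_v \prod_{v'\in \mathcal{A}}(1-\mu^t_{v'}) - \eta .
\end{equation*}
The key structural observation is that, at iteration $k$ of Algorithm \ref{alg:greedy}, the factor $\prod_{v'\in \mathcal{A}_{k-1}}(1-\mu^t_{v'})$ is the same for every candidate $v\in\mathcal{V}^t\setminus\mathcal{A}_{k-1}$, so the argmax in line 5 is attained by the remaining replication with the largest $\mu^t_v$. Hence the greedy algorithm is equivalent to sorting $\mathcal{V}^t$ in decreasing order of $\mu^t_v$ and adding replications one by one until either $k=b^t$ or the marginal becomes non-positive.

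Next I would prove optimality in two steps. (i) \emph{Fixed cardinality is optimal:} for any $k\le b^t$, the reward $1-\prod_{v\in \mathcal{A}}(1-\mu^t_v)-\eta k$ on subsets of size $k$ depends on $\mathcal{A}$ only through $\prod_{v\in \mathcal{A}}(1-\mu^t_v)$, which is minimized precisely by picking the $k$ largest $\mu^t_v$; this is exactly what the sorted-greedy does. (ii) \emph{Optimal cardinality is chosen:} denote by $v_1,v_2,\dots$ the sorted order and let $f(k)$ be the objective value on $\{v_1,\dots,v_k\}$. Then
\begin{equation*}
f(k+1)-f(k) = \mu^t_{v_{k+1}} \prod_{i=1}^{k}(1-\mu^t_{v_i}) - \eta ,
\end{equation*}
and both $\mu^t_{v_{k+1}}$ (by the sort) and $\prod_{i=1}^{k}(1-\mu^t_{v_i})$ (since each factor lies in $[0,1]$) are non-increasing in $k$. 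Therefore $f(k+1)-f(k)$ is non-increasing in $k$: once it first becomes $\le 0$, it stays $\le 0$ for all larger $k$, so $f$ attains its maximum over $k\in\{0,1,\dots,b^t\}$ exactly at the greedy's stopping index. Combining (i) and (ii), the subset returned by Algorithm \ref{alg:greedy} maximizes the objective of \textbf{P2}.

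There is no real obstacle here; the whole argument hinges on noticing the multiplicative separation of $\Delta(\bm{\mu}^t,\{v\}\mid \mathcal{A})$. The only mildly careful point is the tie/zero-marginal case: if the marginal is exactly zero, Algorithm \ref{alg:greedy} stops (it checks $>0$), but the objective is unchanged by that decision, so optimality is unaffected. The budget constraint $|\mathcal{A}^t|\le b^t$ is handled automatically because the greedy never adds more than $b^t$ replications, and top-$b^t$ is optimal under that constraint by part (i).
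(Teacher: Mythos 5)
Your proof is correct and follows essentially the same route as the paper's: both reduce the greedy rule to sorting by $\mu^t_v$ (the paper via its Lemma that larger $\mu$ gives larger marginal reward, you via the explicit factorization $\Delta(\bm{\mu}^t,\{v\}\mid\mathcal{A})=\mu^t_v\prod_{v'\in\mathcal{A}}(1-\mu^t_{v'})-\eta$), then argue that top-$k$ is optimal for each fixed cardinality and that the non-increasing marginals make the first non-positive marginal the correct stopping point. Your version just fills in explicitly the two facts the paper asserts without detail.
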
	

\begin{proof}
see in online Appendix \ref{proof:prop:greedy_op} \cite{onlineappendix}.
\end{proof}

Let $\mathcal{A}^{*,t}$ be the optimal replication decision to the per-slot problem for task $t$. Therefore, the optimal solution for \textbf{P1} is $\{\mathcal{A}^{*,t}\}_{t = 1}^{T}$. Since this optimal solution is obtained by an oracle, we call it the oracle solution. Let $\{\mathcal{A}^{t}\}^T_{t=1}$ be the replication decisions derived by a certain algorithm. The performance of this algorithm is evaluated by comparing its loss with respect to the oracle algorithm. This loss is called the \emph{regret} of the algorithm which is formally defined as 
\begin{align}
	R(T) = \mathbb{E}\left[ \sum\nolimits_{t=1}^{T}  \left(r(\mathcal{A}^{*,t}) - r(\mathcal{A}^t)\right) \right]
\end{align}
which is equivalent to $R(T) = \sum_{t=1}^{T} u(\bm{\mu}^t,\mathcal{A}^{*,t}) - u(\bm{\mu}^t,\mathcal{A}^t)$. 

In the above, we have discussed the oracle solution to \textbf{P1} by assuming that the expected quality of replications is known to the RSU. However, in the real VCC application, it is difficult, if not impossible, to know in advance the replication qualities precisely due to the uncertainty in vehicle movement and network conditions. In this case, the replication decisions cannot be easily derived by the greedy algorithm alone. In the next section, we put the task replication problem into a contextual-combinatorial MAB (CC-MAB) framework, such that a RSU is able to learn the expected quality of task replications over time by observing the contexts of replications and then make smart replication decisions.   

\section{CC-MAB for Deadline-Aware Task Replication}\label{sec:date-v}
Whether a replication can be completed depends on many factors which are collectively referred to as \emph{context}. For example, relevant factors can be task information (e.g. data size of task input and result affects the delay during transmission), vehicle information (e.g. speeds of TaV and SeVs influence vehicle locations when sending back the task results, and therefore determine whether the inter-RSU transmission is necessary), road conditions (e.g. high vehicle density causes high wireless transmission interference and therefore increases the transmission delay). This categorization is clearly not exhaustive and the impact of each single context on the replication quality is unknown a priori. Our algorithm will learn to discover the underlying connection between such context and replication quality, thereby facilitating the task replication decision making.

\subsection{Context-aware Task Replication}
Let $\Phi_{\text{T}}$ be the context space of tasks which includes the task information (e.g. size of input/result data, deadline, etc.) and TaVs' vehicle information (e.g. speed, location, and available computational resources). Let $\Phi_{\text{S}}$ be the context space of SeVs' vehicle information. The RSU sets the joint space $\Phi = \Phi_{\text{T}} \times \Phi_{\text{S}}$ as the context space of replications. The context space $\Phi$ is assumed to be bounded and hence can be denoted as $\Phi = [0,1]^D$ without loss of generality, where $D$ is the number of dimensions of context space $\Phi$. Since the service delay of replication $v\in\mathcal{V}^t$ is now parameterized by its context $\phi^t_v$, we write a quality of a replication $v$ as $q^t_v(\phi^t_v) = \textbf{1}\{d^t_v(\phi^t_v) \leq L^t\}$ and its expected value as $\mu^t_v(\phi^t_v)$. Let $\bm{\mu}^t = \{\mu_v^t(\phi^t_v)\}_{v\in\mathcal{V}^t}$ collect all the context-specific replication qualities.

Now, we are ready to formulate the task replication problem as a CC-MAB problem. For each task $t$, the RSU operates sequentially as follows: (i) upon receiving the task request, the RSU lists all possible replications $\mathcal{V}^t$ and observes the context $\phi^t_v \in \Phi$ for each replication $v\in\mathcal{V}^t$. Let $\bm{\phi}^t = \{\phi^t_v\}_{v\in\mathcal{V}^t}$ collect all replications' context. (ii) the RSU selects a subset of replications based on the observed context $\bm{\phi}^t$ and the knowledge learned from the previous tasks. (iii) the RSU sends the task replications to selected SeVs and then collects results when the task is processed. (iv) The RSU sends the task result back to TaV and observes the quality of selected replications. The observed qualities will be used to update the current knowledge. Yet, notice that the quality task $t$'s replication may not be immediately observed before the arrival of task $t+1$ due to the transmission/computation delays incurred by VCC, which causes the problem of delayed feedback in CC-MAB. For the ease of presentation and explanation, we assume for now that the qualities of replications for task $t$ are observed before the arrival of next task $t+1$, and therefore the feedback of CC-MAB is non-delayed. The more practical case of delayed-feedback will be discussed later in this section. 

\subsection{DATE-V with Non-delayed Feedback}
DATE-V (Deadline-Aware Task rEplication in Vehicle Cloud) (in Algorithm \ref{alg:ccmab}) is developed  based on the CC-MAB framework. In the initialization phase, DATE-V creates a partition $\mathcal{P}_T$ on the contexts space $\Phi$, which splits $\Phi$ into $(h_T)^D$ sets based on the given time horizon $T$. These sets are given by $D$-dimensional hypercubes of identical size $\frac{1}{h_T} \times \dots \times \frac{1}{h_T}$. Here, $h_T$ is a parameter to be designed to determine the number of hypercubes in $\mathcal{P}_T$. Additionally, the RSU keeps a counter $C^t(p)$ for each hypercube $p \in \mathcal{P}_T$ which records the number of selected replications that have context $\phi^t_v$ falling in hypercube $p$ before receiving task $t$. Fig. \ref{fig:ctxt_space} offers an illustration of context partition and counter update. Moreover, the algorithm also keeps an estimated quality $\hat{\mu}^t(p)$ for each hypercube. Let $\mathcal{Q}^t(p) = \{q(\phi^\tau_v)|~\phi^\tau_v \in p, v \in \mathcal{A}^\tau, \tau = 1,\dots,t-1\}$ be the set of observed qualities of replications with context in $p$. Then, the estimated quality for replications with context $\phi^t_v \in p$ is $\hat{\mu}^t(p) = \frac{1}{C^t(p)}\sum\nolimits_{q \in \mathcal{Q}^t(p)} q$.
\begin{figure}[tb]
	\centering
	\includegraphics[width = 0.55\linewidth]{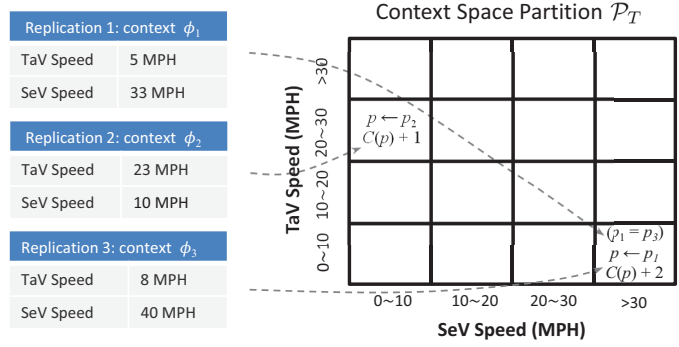}
	\caption{Illustration of context space partition and counter update.}
	\label{fig:ctxt_space}
	\vspace{-0.15 in}
\end{figure}

For each task $t$, DATE-V performs the following steps: the contexts of all possible replications $\bm{\phi}^t = \{\phi^t_v\}_{v\in\mathcal{V}^t}$ are observed. For each context $\phi^t_v$, the algorithm determines a hypercube $p^t_v \in \mathcal{P}_T$ such that $\phi^t_v \in p^t_v$ holds. The collection of these hypercubes for task $t$ is denoted by $\p^t = \{p^t_v\}_{v \in \mathcal{V}^t}$. Then the algorithm checks if there exist hypercubes $p\in \p^t$ that have not been explored sufficiently often. For this purpose, we define the \emph{under-explored} hypercubes for task $t$ as:
\begin{align}
\mathcal{P}^{\text{ue},t}_T  \triangleq \left\{p\in \mathcal{P}_T \mid \exists~v\in\mathcal{V}^t, \phi^t_v \in p,  C^t(p)\leq K(t)\right\}
\end{align}
where $K(t)$ is a deterministic, monotonically increasing control function that needs to be designed by CC-MAB. In addition, we collect the replications that fall in the under-explored hypercubes in $\mathcal{V}^{\text{ue},t} \triangleq \{v\in\mathcal{V}^t \mid p^t_v\in \mathcal{P}^{\text{ue},t}_T\}$. 

Depending on the under-explored replications $\mathcal{V}^{\text{ue},t}$ for task $t$, DATE-V can either be in an exploration phase or an exploitation phase. If $\mathcal{V}^{\text{ue},t}$ is non-empty, DATE-V enters an exploration phase. Let $z = |\mathcal{V}^{\text{ue},t}|$ be the size of under-explored replications. If the set $\mathcal{V}^{\text{ue},t}$ contains at least $b^t$ replications ($z \geq b^t$), then DATE-V randomly selects $b^t$ replications from $\mathcal{V}^{\text{ue},t}$. If $\mathcal{V}^{\text{ue},t}$ contains fewer than $b^t$ replications ($z < b^t$), then DATE-V selects all $z$ replications from $\mathcal{V}^{\text{ue},t}$. Since the budget $b^t$ is not fully utilized, the remaining $(b^t-z)$ replications are picked using Greedy algorithm (Algorithm \ref{alg:greedy}) with estimated qualities $\hat{\bm{\mu}}^t$:
\begin{align}\label{eq:semi_explore}
v_k = \argmax_{v\in\mathcal{V}^t\backslash\{\mathcal{V}^{\text{ue},t}\cup\mathcal{A}_{k-1}\}} \Delta(\hat{\bm{\mu}}^t, \{v\} | \{\mathcal{A}_{k-1}\cup\mathcal{V}^{\text{ue},t}\}),
\end{align}
where $k = 1,\dots, (b^t-z)$ and $\mathcal{A}_{k-1} = \{v_i\}_{i=1}^{k-1}$. If $\mathcal{V}^{\text{ue},t}$ is empty, DATE-V enters an exploitation phase. It selects up to $b^t$ replications using Greedy algorithm with estimated qualities:
\begin{align}\label{eq:exploit}
v_k = \argmax_{v\in\mathcal{V}^t\backslash\mathcal{A}_{k-1}} \Delta(\hat{\bm{\mu}}^t, \{v\} | \mathcal{A}_{k-1}), ~~k = 1,\dots, b^t
\end{align} 
After selecting the replications, DATE-V observes the qualities realized by selected replications and then updates the estimated quality and the counter of each hypercube in $\p^t$. Note that the task index $t$ of replication quality estimation and counter is drop in the pseudo-code (Line \ref{line:update}) since previous values of counters and quality estimations do not need to be stored.
\begin{algorithm}[tb]
	\caption{DATE-V}
	\begin{algorithmic}[1] 
		\State \textbf{Input}: $T$, $h_{T}$, $K(t)$, $\bm{\phi}^t$.
		\State \textbf{Initialization}: $\mathcal{P}_{T}$; $C^0(p) = 0, \hat{\mu}(p) = 0, \forall p \in \mathcal{P}_{T}$;
		\For {$t=1,\dots,T$}:
		\State Observe replications $\mathcal{V}^t$ and contexts $\bm{\phi}^t=\{\phi^t_{v}\}_{v\in\mathcal{V}^t}$;
		\parState{Find $\p^t=\{p^t_{v}\}_{v\in\mathcal{V}^t}, p^t_{v} \in\mathcal{P}_{T}$ such that $\phi^t_{v}\in p^t_{v}$;}
		\parState{Identify  $\mathcal{P}^{\text{ue},t}$ and $\mathcal{V}^{\text{ue},t}$; let $z=|\mathcal{V}^{\text{ue},t}|$;}
		\If {$\mathcal{P}^{\text{ue},t} \neq \emptyset$}: \Comment{\textit{Exploration}}
		\If{$z \geq b^t$}: \parState{$\mathcal{A}^t \leftarrow$ randomly pick $b^t$ replications from $\mathcal{V}^{\text{ue},t}$;}
		\Else : \parState{$\mathcal{A}^t \leftarrow$ pick $z$ replications in $\mathcal{V}^{\text{ue},t}$ and other $(b^t-z)$ as in \eqref{eq:semi_explore};}
		\EndIf
		\Else:  $\mathcal{A}^t \leftarrow$ pick $b^t$ arms as in \eqref{eq:exploit};  \Comment{\textit{Exploitation}}
		\EndIf
		\For {each replication $v\in\mathcal{A}^t$}:
		\State Observe quality $q^t_v$ of replication $v$;
		\State Update estimation: $\hat{\mu}(p^t_{v})=\frac{\hat{\mu}(p^t_{v})C(p^t_{v})+q^t_{v}}{C(p^t_{v})+1}$; \label{line:update}
		\State Update counters: $C(p^t_{v})=C(p^t_{v})+1$; \label{line:counter_update_o}
		\EndFor
		\EndFor
	\end{algorithmic} \label{alg:ccmab}
\end{algorithm}

It remains to design the parameter $h_T$ and the control policy $K(t)$ to achieve a sublinear regret in the time horizon $T$, i.e., $R(T) = O(T^\gamma), \gamma < 1$, such that DATE-V guarantees an asymptotically optimal performance ($\lim_{T\to\infty} R(T)/T=0$). 

\subsection{Parameter Design and Regret Analysis}
In this section, we design the algorithm parameters $h_T$ and $K(t)$ and give a corresponding upper bound for the regret incurred by DATE-V. The regret analysis is carried out based on the natural assumption that the expected qualities of arms are similar if they have similar context \cite{muller2017context}. This assumption is formalized by the H\"{o}lder condition as follows: 
\begin{assumption}[H\"{o}lder Condition] \label{holder}
	There exists $L>0$, $\alpha>0$ such that for any two contexts $\phi,\phi^\prime\in \Phi$, it holds that $|\mu(\phi)-\mu(\phi^\prime)| \leq L \|\phi-\phi^\prime\|^{\alpha}$, where $\|\cdot\|$ is the Euclidean norm.
\end{assumption}
Assumption \ref{holder} is needed for the regret analysis, but it should be noted that DATE-V can also be applied if this assumption does not hold. However, a regret bound might not be guaranteed in this case. Now, we set $h_T = \lceil T^{\frac{1}{3\alpha+D}}\rceil$ for the context space partition, and $K(t) = t^{\frac{2\alpha}{3\alpha+D}}\log(t)$ in each time slot $t$ for identifying the under-explored hypercubes and arms. Then, we have a sublinear regret upper bound of DATE-V:    

\begin{proposition}[Regret Upper Bound] \label{theo:upper_bound}
	Let $K(t) = t^{\frac{2\alpha}{3\alpha+D}}\log(t)$ and $h_T = \lceil T^{\frac{1}{3\alpha+D}}\rceil$. If H\"{o}lder condition holds true, the regret $R(T)$ is bounded by 
	\begin{align*}
	R(T) \leq &(1+\eta B)2^{D}\left(\log(T)T^{\frac{2\alpha+D}{3\alpha+D}} + T^{\frac{D}{3\alpha+D}}\right)  \\
	& +(1+\eta B)\left(\sum\nolimits_{k=1}^B{V^{\max} \choose k}\right) \frac{\pi^2}{3} \\
	& + \left(3BLD^{\alpha/2}+\frac{2B+2BLD^{\alpha/2}}{(2\alpha+D)/(3\alpha+D)} \right) T^{\frac{2\alpha+D}{3\alpha+D}}
	\end{align*}
	where $B = \max(b^1,\dots,b^T)$ is the maximum possible budget for a task. The leading order of the regret $R(T)$ is $O\left( (1+\eta B)2^{D} T^{\frac{2\alpha+D}{3\alpha+D}}\log(T)\right)$, which is sublinear.
\end{proposition}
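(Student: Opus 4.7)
The plan is to decompose $R(T)$ into three interpretable pieces corresponding to the three places where DATE-V can lose to the oracle, then substitute the chosen $h_T$ and $K(t)$. First I split the rounds into \emph{exploration rounds} (where $\mathcal{P}^{\text{ue},t}_T\neq\emptyset$) and \emph{exploitation rounds}, and then further decompose the exploitation loss into (i) a \emph{discretization error} from treating all contexts in a hypercube $p$ as having one common mean quality $\bar{\mu}^t(p)$, and (ii) a \emph{sampling error} $|\hat{\mu}^t(p)-\bar{\mu}^t(p)|$ from finite averaging inside $p$.

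The exploration regret is straightforward to count: each hypercube can trigger the exploration branch at most $K(T)+1$ times before $C^t(p)$ exceeds $K(t)$, so the number of exploration rounds is at most $|\mathcal{P}_T|(K(T)+1)=h_T^D(K(T)+1)$, and each such round costs at most $(1+\eta B)$ since rewards lie in $[-\eta B,1]$. Substituting $h_T^D\le 2^D T^{D/(3\alpha+D)}$ and $K(T)=T^{2\alpha/(3\alpha+D)}\log T$ recovers the first term of the bound. The discretization error is then handled by the H\"older assumption: any two contexts in a common hypercube differ by at most $\sqrt{D}/h_T$ in Euclidean norm, so $|\mu_v^t(\phi_v^t)-\bar{\mu}^t(p_v^t)|\le L D^{\alpha/2}h_T^{-\alpha}$. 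Combined with the Lipschitz behaviour of the submodular reward in the quality vector and summed over the at most $B$ replications per task and over $T$ tasks, this yields a contribution of order $BL D^{\alpha/2}T h_T^{-\alpha}=O(BL D^{\alpha/2}T^{(2\alpha+D)/(3\alpha+D)})$, matching the $3BL D^{\alpha/2}T^{(2\alpha+D)/(3\alpha+D)}$ piece of the stated bound.

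For the sampling error I would use a ``good event'' argument. Define $\epsilon(t)=\sqrt{\log(t)/K(t)}=t^{-\alpha/(3\alpha+D)}$ and the event $G_t=\{|\hat{\mu}^t(p)-\bar{\mu}^t(p)|\le\epsilon(t)\text{ for every hypercube }p\text{ relevant to task }t\}$. Applying Hoeffding's inequality to each average backed by at least $K(t)$ samples, together with a union bound over the at most $\sum_{k=1}^{B}\binom{V^{\max}}{k}$ candidate subsets, gives $\Pr(\bar{G}_t)=O(1/t^2)$, whose sum across $t$ produces the $\pi^2/3$ factor in the second term. On $G_t$, because the marginal gain $\Delta(\bm{\mu},\{v\}\mid\mathcal{A})$ is $1$-Lipschitz in $\mu_v$ and Proposition \ref{prop:greedy_opt} shows that greedy is optimal for this submodular problem, running greedy on $\hat{\bm{\mu}}^t$ instead of $\bm{\mu}^t$ costs at most $O(B\epsilon(t))$ per round; summing $\epsilon(t)$ over $t\le T$ gives $O\!\left(BT^{(2\alpha+D)/(3\alpha+D)}/((2\alpha+D)/(3\alpha+D))\right)$, which fills the remaining piece of the third term. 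Collecting the three contributions recovers the stated bound, and since $(2\alpha+D)/(3\alpha+D)<1$ the leading order $O((1+\eta B)2^D T^{(2\alpha+D)/(3\alpha+D)}\log T)$ is sublinear.

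The main obstacle is the submodular-perturbation step connecting $\ell_\infty$ errors in the quality vector to reward-function gaps: I need to show cleanly that if $\|\hat{\bm{\mu}}^t-\bm{\mu}^t\|_\infty\le\delta$, then the greedy-on-$\hat{\bm{\mu}}^t$ output $\mathcal{A}^t$ satisfies $u(\bm{\mu}^t,\mathcal{A}^{*,t})-u(\bm{\mu}^t,\mathcal{A}^t)=O(B\delta)$. This is where both the discretization bound $L D^{\alpha/2}h_T^{-\alpha}$ and the sampling fluctuation $\epsilon(t)$ must be inserted simultaneously, and the semi-exploration case from \eqref{eq:semi_explore} (where forced picks from $\mathcal{V}^{\text{ue},t}$ sit alongside greedy picks on $\mathcal{V}^t\setminus\mathcal{V}^{\text{ue},t}$) needs extra bookkeeping so that its per-round loss is also absorbed into the $(1+\eta B)$ bucket. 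Once this perturbation lemma is in place, the remainder is arithmetic substitution of the chosen $h_T$ and $K(t)$.
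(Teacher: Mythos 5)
Your proposal is correct and follows essentially the same route as the paper's proof: both decompose the regret into an exploration count bounded by $h_T^D\lceil K(T)\rceil(1+\eta B)$, a Hoeffding-plus-union-bound tail summing to the $\pi^2/3$ constant, and a H\"older-discretization-plus-estimation-error term balanced by the choice of $z,\gamma,\theta,A$. The paper merely packages your ``good event'' argument differently --- it partitions exploitation rounds into \emph{suboptimal} and \emph{near-optimal} subsets via the threshold $At^\theta$, and its condition $2H(t)+2BLD^{\alpha/2}h_T^{-\alpha}\leq At^\theta$ is exactly your combined sampling-plus-discretization perturbation bound, so the key lemma you flag as the main obstacle is resolved there in precisely the way you sketch.
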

\begin{proof}
	See in online Appendix \ref{proof:theo:upper_bound} \cite{onlineappendix}. 
\end{proof}  
The regret upper bound given in Proposition \ref{theo:upper_bound} is sublinear for a sequence of tasks $\{1,2,\dots,T\}$. In addition, the regret bound is valid for any finite task number, thereby providing a bound on the performance loss for any finite $T$. Therefore, this Proposition also can be used to characterize the convergence speed of DATE-V.

\subsection{DATE-V with Delayed Feedback}
We have evaluated the performance of DAVE-V with non-delayed feedback. However, the non-delayed feedback assumption can be easily violated in application since the RSU can observe the qualities of replications only after the task results are returned, yet at this time new task requests may have already arrived. Therefore, in the following, we analyze the performance of DATE-V with delayed feedback. 

For an arbitrary task $t$, DAVE-V has counters $C^t(p), p\in \p^t$ that counts the number of selected replications with context in $p$. Since the feedback is delayed, the number of observed qualities may be less than the number of selected replications. Therefore, we introduce a new counter $M^t(p)$ to record the number of observed qualities for the replications with context in hypercube $p$ before receiving task $t$. Clearly, we will have $M^t(p) \leq C^t(p)$. Let $\mathcal{Q}_M^t(p)$ be the set of observed qualities, the estimated quality is now $\hat{\mu}(p) = \frac{1}{M^t(p)}\sum_{q \in \mathcal{Q}_M^t(p)} q$. 

Now, we compare the performances of DAVE-V under non-delayed feedback and delayed feedback cases by analyzing the exploration and exploitation phases separately. We first consider the exploration phase of DATE-V in a delayed feedback setting. Whether DATE-V will enter the exploration for task $t$ is determined by the counters $C^t(p), \forall p \in \p^t$ and does not depend on the number of observed qualities. Therefore, the regrets incurred by the exploration in the non-delayed feedback and delayed feedback cases are the same. Next, we consider the exploitation phase of DATE-V with delayed feedback. For a task $t$, if its counters satisfy $C^t(p) > K(t), \forall p \in \p^t$, then DATE-V enters the exploitation phase. Due to the delayed feedback, we have two cases for the exploitation: i) the number of observed qualities satisfies $K(t)<M^t(p)\leq C^t(p), \forall p \in \p^t$. In this case, though there are qualities remaining unobserved, the number of observed qualities is larger than $K(t)$. Therefore, exploiting the estimated qualities guarantees the regret bound as proved in the non-delayed feedback case. ii) the number of observed qualities in $p$ satisfies $M^t(p)<K(t)\leq C^t(p), \exists p \in \p^t$. Since the number of observed qualities $M^t(p), \exists p\in \mathcal{P}_T$ is less than $K(t)$, using $\mu^t(p)$ for task replication cannot guarantee the regret bound in the exploitation. We call the exploitation phase $t$ with $M^t(p)<K(t)\leq C^t(p), \exists p \in \p^t$ as \emph{mis-exploitation}. To bound the regret of DATE-V with delayed feedback, we only need to consider the extra regret in mis-exploitation.

\begin{proposition}\label{theo:upper_bound_DF}
	If DATE-V is run with the parameters given in Proposition \ref{theo:upper_bound}, the regret due to mis-exploitation is
	\begin{align*}
		R_m(T)\leq  \lambda L^{\max} (1+\eta B)(T^{\frac{2\alpha}{3\alpha + D}}\log(T) + 1).
	\end{align*}
	where $\lambda$ is the task arrival rate and $L^{\max}$ is the maximum task deadline. The regret of DATE-V with delayed-feedback $R^\prime (T)$ is bounded by $R^\prime(T) \leq  R(T) + R_m(T)$, where $R(T)$  is the regret upper bound of DATE-V with non-delayed feedback. 
\end{proposition}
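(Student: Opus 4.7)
The plan is to obtain $R^\prime(T) \leq R(T) + R_m(T)$ by a clean exploration/exploitation decomposition, and then to bound $R_m(T)$ using the fact that the feedback delay of any replication is a priori bounded by the deadline. My first observation is that the branching of Algorithm \ref{alg:ccmab} between exploration and exploitation depends only on the counters $C^t(p)$, and these evolve identically whether feedback is delayed or not. Hence the tasks classified as exploration, and their contribution to regret, coincide exactly with the non-delayed analysis. For each exploitation task $t$ with $M^t(p) \geq K(t)$ for every $p \in \p^t$ (a ``correct'' exploitation), the estimator $\hat{\mu}^t(p)$ is an average of at least $K(t)$ independent Bernoulli samples, so the Hoeffding concentration and greedy-submodular argument used in Proposition \ref{theo:upper_bound} applies unchanged and yields the same per-slot bound. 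The only new regret comes from mis-exploitation tasks, those for which $\exists p \in \p^t$ with $M^t(p) < K(t) \leq C^t(p)$; the per-slot regret on such tasks is trivially at most $1+\eta B$ because the reward of $\mathcal{A}^t$ lies in $[-\eta B,\,1-\eta]$.

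It then remains to bound the number $N_m$ of mis-exploitation tasks, and the key ingredient is that the feedback delay of any individual replication is at most $L^{\max}$: either the SeV returns the result by the deadline and $q^t_v = 1$ is recorded, or the deadline elapses and $q^t_v = 0$ is recorded as a failure. With task arrival rate $\lambda$, at most $\lambda L^{\max}$ tasks arrive during a single feedback-delay window, so at every $t$ and every $p$ the pending backlog satisfies $C^t(p) - M^t(p) \leq \lambda L^{\max}$. Using the monotonicity of $K(t)$, I would charge each mis-exploitation task to a specific pending replication, namely the earliest-dispatched unobserved replication in an offending hypercube $p$ whose missing feedback keeps $M^t(p)$ below $K(t)$. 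Each such pending replication can absorb at most $\lambda L^{\max}$ charges before its feedback arrives and the violation at that $p$ and that threshold level is closed; meanwhile the monotone threshold $\lceil K(t) \rceil$ takes at most $\lceil K(T)\rceil + 1$ integer values, which governs how many times a fresh gap can open. Combining these gives $N_m \leq \lambda L^{\max}(K(T) + 1)$, and multiplying by the per-slot factor $1+\eta B$ and substituting $K(T) = T^{2\alpha/(3\alpha+D)}\log(T)$ yields the stated bound on $R_m(T)$; the additive decomposition $R^\prime(T)\leq R(T)+R_m(T)$ then follows.

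The main obstacle I foresee is making the counting step rigorous without incurring an unwanted factor of $(h_T)^D$. A naive union bound over hypercubes would yield exactly such a factor, so the argument has to be organized at the level of individual pending replications rather than hypercubes, exploiting the uniform bound $\lambda L^{\max}$ on the number of replications in flight at any moment. Handling the non-integer growth of $K(t)$ also requires some care, but working with $\lceil K(t) \rceil$ and accounting for the integer-valued jumps of $C^t(p)$ and $M^t(p)$ is the cleanest way to convert the informal ``pending replication'' charging into a precise bound, and I expect no further difficulties beyond this accounting once the charging scheme is fixed.
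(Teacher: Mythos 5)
Your proposal is correct and follows essentially the same route as the paper: the additive decomposition $R'(T)\le R(T)+R_m(T)$ rests on the observation that the exploration/exploitation branching depends only on $C^t(p)$, and the mis-exploitation count is bounded by pairing the at most $\lceil K(T)\rceil$ integer levels of the monotone threshold with the at most $\lambda L^{\max}$ tasks that can arrive during one feedback-delay window, exactly as in the paper's segmentation of $\{1,\dots,T\}$ by the value of $\lceil K(t)\rceil$. Your "charge each mis-exploitation to a pending replication" phrasing is a minor repackaging of the paper's per-segment window argument (and your worry about an extra $(h_T)^D$ factor is handled there with the same single-window-per-level accounting you propose), so no substantive difference remains.
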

\begin{proof}
	See in online Appendix \ref{proof:theo:upper_bound_DF} \cite{onlineappendix}.
\end{proof}

Proposition \ref{theo:upper_bound_DF} shows that the regret of DATE-V with delayed feedback is the regret of DATE-V with non-delayed feedback plus an additional term which grows with the increase task arrival. Note that this additional term is still sublinear in $T$ which mean the regret of DATE-V with delayed feedback is still sublinear. Moreover, we see that the leading order of $R^\prime(T)$ is the same as that of $R(T)$.

\section{Simulation}\label{sec:simulation}
\subsection{Simulation Setup}
Our simulation uses the mobility trace of San Francisco Yellow Cabs \cite{epfl-mobility-20090224}. It records the GPS coordinates of 550 cabs, logged approximately every 45 seconds, over a period of 30 days, in the San Francisco Bay Area. These cab traces are used to simulate the vehicle movement in the VCC system. We focus on an area of coordinate from $37^{\degree}74^\prime$N to $37^{\degree}76^\prime$N, $122^{\degree}39^\prime$W to $122^{\degree}24^\prime$W. Fig. \ref{fig:RSUdep} depicts a portion of all cab traces in this area, which at the same time shows the road layout. We deploy a total of 12 RSU along the main roads. The distance between two neighbor RSUs is set around 200m. The maximum coverage of a RSU is set as 300m such that most vehicles in this area can access at least one RSU. A RSU is randomly selected as an example to run the proposed algorithm. 
\begin{figure}[b]
	\vspace{-0.1 in}
	\centering
	\includegraphics[width=0.5\linewidth]{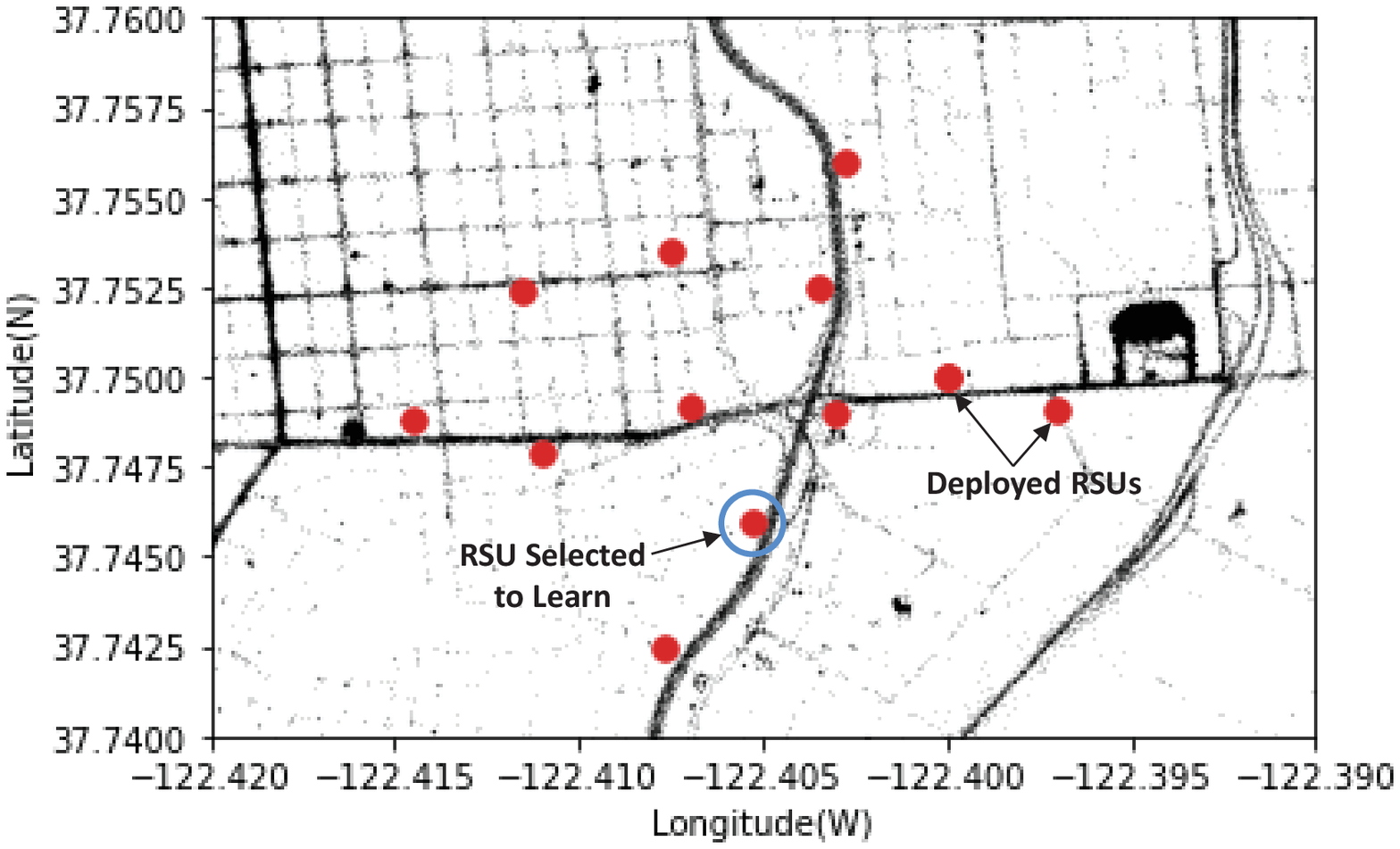}
	\caption{Road layout and RSU deployment.}
	\label{fig:RSUdep}
\end{figure}
For simplicity, we assume the tasks from TaVs are of the same type with the input data size $x^t = 1$Mb, the task result size $y^t = 0.5$Mb and the required CPU cycles $w^t = 200$M. The deadline $L^t$ for each task $t$ is randomly chosen from $[1,2.5]$sec. The RSU-to-Vehicle data transmission operates on fixed transmission rate 3Mbps. The Vehicle-to-RSU transmission rate is determined by the Shannon Capacity where the bandwidth $W = 10$MHz, transmission power of vehicles is 10dBm, noise power $\sigma^2 = -172$dBm. The backhaul transmission rate is chosen from $g^t \in [0.5,1.5]$Mbps. The round trip time for backhaul transmission is chosen from $[20, 300]$ms. We collect speeds and locations of TaVs/SeVs, and the task deadlines as \emph{context}. Fig. \ref{fig:context_reward} depicts the impact of \emph{location} and \emph{task deadline} on the replication quality, which shows that the quality of a replication is very related to its context.
\begin{figure}[tb]
	\centering
	\vspace{-0.1 in}
	\includegraphics[width=0.5\linewidth]{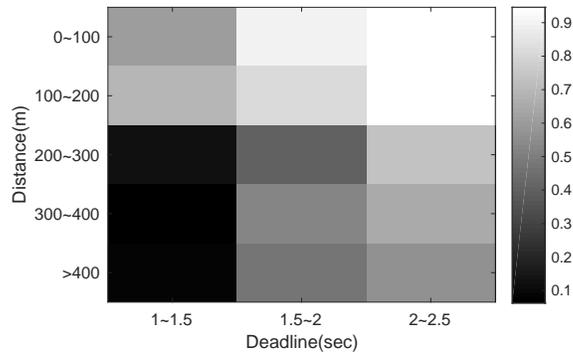}
	\caption{Impact of the context on replication quality. \textit{We focus on the vehicle locations and task deadlines; the location of TaV and SeV are converted to distance. Replications with smaller distance tend to have higher quality.}}
	\vspace{-0.15 in}
	\label{fig:context_reward}
\end{figure}
DATE-V is compared with the following benchmarks:\\
\textbf{1) Oracle}: Oracle knows precisely the expected quality of each replication before making task replication decisions. For each task, Oracle selects replications based on expected qualities using the greedy algorithm presented in Algorithm \ref{alg:greedy}.\\
\textbf{2)} \textbf{mLinUCB}: LinUCB \cite{li2010contextual} is a contextual bandit algorithm which recommends exactly one arm in each round. To select multiple replications, mLinUCB repeats LinUCB algorithm $b^t$ times in each round. By sequentially removing selected replications, we ensure that the $b^t$ replications returned by mLinUCB are distinct in for each task $t$.\\
\textbf{3) UCB}: UCB algorithm \cite{auer2002finite} is a classical MAB algorithm (non-contextual and non-combinatorial) that achieves the logarithmic regret bound. Similar to mLinUCB, we repeat UCB $b^t$ times to select multiple replications for each task.\\
\textbf{4) Random}: The Random algorithm picks $b^t$ replications randomly from the available replications for each task $t$.

	\begin{figure}[t]
		\centering
		\includegraphics[width=0.5\linewidth]{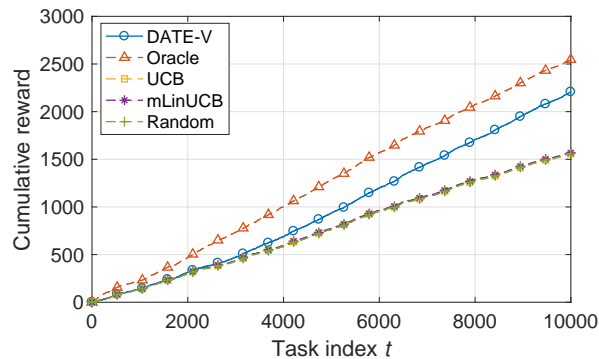}
		\caption{Comparison on cumulative reward}
		\label{fig:cum_reward}
	\end{figure}%
	\begin{figure}[t]
		\centering
		\includegraphics[width=0.5\linewidth]{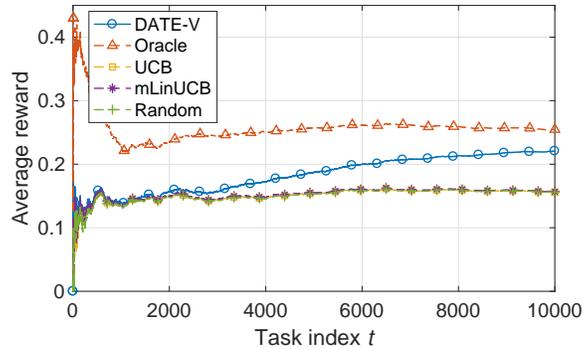}
		\caption{Comparison on average reward}
		\label{fig:ave_reward}
	\end{figure}%
	\begin{figure}[t]
		\centering
		\includegraphics[width=0.5\linewidth]{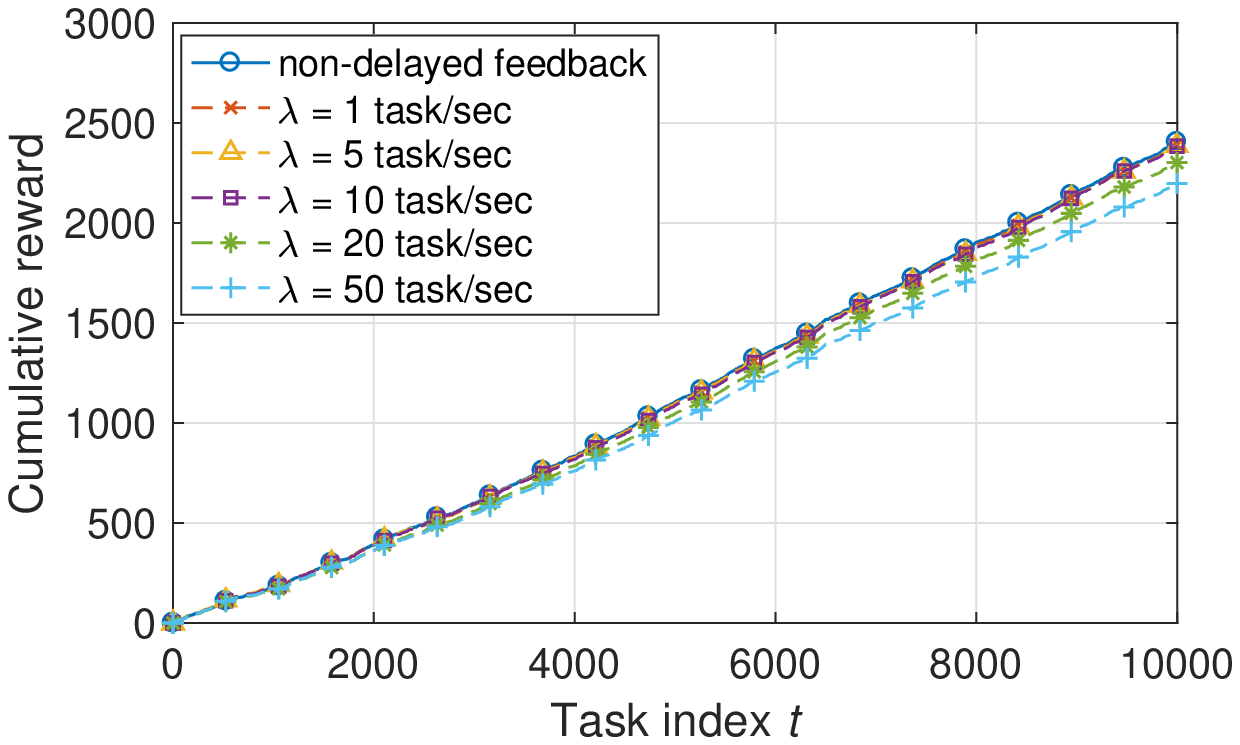}
		\caption{Impact of task arrival rate}
		\label{fig:vary_lambda}
	\end{figure}%

\subsection{Performance Comparison}
Fig. \ref{fig:cum_reward} shows the cumulative rewards achieved by DATE-V and the other 4 benchmarks. As expected, Oracle achieves the highest reward which gives an upper bound to the other algorithms. Among the others, we see that the proposed algorithm significantly outperforms other benchmarks by taking into account the context of tasks and vehicles. It can be observed in the figure that the cumulative reward of DATE-V is similar to that of the Random algorithm in the first 2,500 tasks. This is because the RSU does not enough knowledge at the beginning to link the replications' context and qualities, therefore it randomly explores the available replications,  which is exactly the same as the Random algorithm. After a period of exploration, the proposed algorithm is able to exploit the learned knowledge and we see that cumulative reward of the proposed algorithm begins to approach the cumulative reward of Oracle. For the UCB algorithm, its cumulative reward is similar to that of the Random algorithm. The malfunction of UCB is mainly due to the large the arm sets (TaV-SeV pairs) and hence the UCB algorithm is stuck in the exploration. Further analyzing the rewards achieve by mLinUCB, we know that considering the context for each possible replication is not effective to produce a good result due to the large arm set. We also show the average reward for each replication in Fig. \ref{fig:ave_reward}. We see that the average reward for a task replication achieved by Oracle stabilizes at around 0.25 and DATE-V increases the average replication reward from 0.14 to 0.23. This means that DATE-V can learn context-specific replication qualities over time and after sufficiently many tasks, it selects replications almost as well as Oracle does.

\subsection{DATE-V with Delayed Feedback}
Fig. \ref{fig:vary_lambda} shows the cumulative rewards achieved by DATE-V with non-delayed/delayed feedback. In general, we see that delayed feedback does not incur significant performance loss. We also evaluate the effect of task arrival rate in the delayed feedback scenario. The simulation result is consistent with the theoretic analysis in Proposition \ref{theo:upper_bound_DF} that a higher task arrival rate leads to a larger regret.

\subsection{Impact of Budget}
Fig. \ref{fig:vary_b} depicts the cumulative reward achieved by 5 algorithms under different budgets. We see that the cumulative rewards achieved by DATE-V and Oracle grow with the increase in budget since more beneficial replications can be selected to maximize the reward. It is worth noticing that the cumulative rewards become saturated when the budget is larger than 3 since the proposed algorithm considers the submodularity of the reward function and therefore stops smartly when the marginal reward is low. By contrast, UCB, mLinUCB, and Random always utilize the full budget and select replications without considering the submodular reward. Therefore, with a larger budget, these algorithms keep adding replications when marginal rewards become negative, which decreases the cumulative reward.
\begin{figure} [tb]
	\centering
	\includegraphics[width=0.5\linewidth]{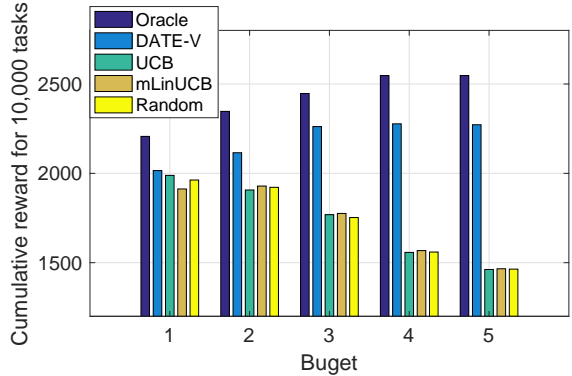}
	\caption{Impact of budget}
	\label{fig:vary_b}
	\vspace{-0.15 in}
\end{figure}%

\subsection{Impact of Task Deadline}
Fig. \ref{fig:vary_L} shows the cumulative rewards achieved by Oracle, DATE-V, and Random with different task deadlines. We see that the cumulative rewards achieved by all three algorithms grow with the increase in the mean task deadline. The reason for this trend is intuitive: the tasks are more likely to be completed if the deadline is loose. In addition, the gap of cumulative reward between DATE-V and Random diminishes as the deadline increases. This is because most of the replications can be completed with loose deadlines and hence the benefit of smart replication provided by DATE-V decreases.
\begin{figure} [tb]
	\centering
	\includegraphics[width=0.5\linewidth]{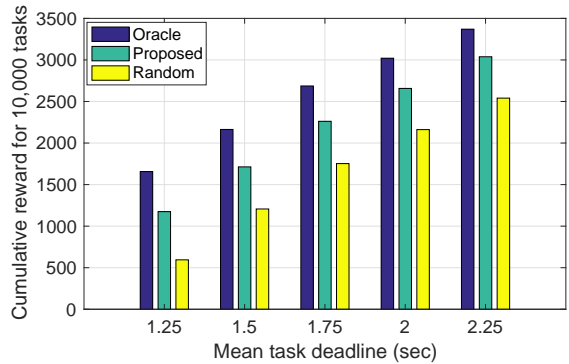}
	\caption{Impact of task deadline.}
	\label{fig:vary_L}
	\vspace{-0.15 in}
\end{figure}%

\section{Conclusion}\label{sec:conclusion}
In this paper, we investigated the task replication for deadline-constrained tasks in VCC systems. A RSU-assisted task scheduling framework is constructed, and a novel task replication algorithm, called DATE-V, is proposed to guarantee the timeliness of task processing. DATE-V addresses many key concerns in VCC systems. It uses side-information (context) of tasks, vehicles to learn the completion probability of a replication with the uncertainty in vehicle movements. The combinatorial feature of DATE-V allows multiple replications to be made for each task such that the completion probability of a task is increased. The DATE-V is practical, easy to implement and scalable to large vehicular networks while achieving provably asymptotically optimal performance. Besides the task replication for VCC, our framework can also be applied to many other sequential decision making problems under uncertainty that involve multiple-play given a limited budget and context information.

\bibliographystyle{IEEEtran}
\bibliography{refs}

\newpage
\appendices
\section{Proof of Proposition \ref{prop:greedy_opt}} \label{proof:prop:greedy_op}
\begin{proof}
	The optimality of greedy algorithm is due to the unique property of submodular reward function \eqref{eq:reward_func} indicated in the following Lemma.
	\begin{lemma}\label{lemma}
		For any two replication $v,v^\prime \in \mathcal{V}^t\backslash \mathcal{A}^t$, if $\mu_{v^\prime} \geq \mu_v$ holds true, we have $\Delta(\bm{\mu}, \{v^\prime\} | \mathcal{A}^t) \geq \Delta(\bm{\mu}, \{v\} | \mathcal{A}^t)$.
	\end{lemma}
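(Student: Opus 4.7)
My plan is to establish the lemma by a short direct computation that exploits the product form of the reward function. The definition $u(\bm{\mu}, \mathcal{A}^t) = 1 - \prod_{u\in\mathcal{A}^t}(1-\mu_u) - \eta\,|\mathcal{A}^t|$ means that adjoining any single arm $v\notin \mathcal{A}^t$ changes the product only by the multiplicative factor $(1-\mu_v)$, and changes the cardinality by exactly one. This is the structural observation that makes the marginal reward essentially linear in $\mu_v$.

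First, I would introduce the shorthand $P := \prod_{u\in\mathcal{A}^t}(1-\mu_u)$. Note that $P$ does not involve $v$ or $v'$, since by hypothesis neither is in $\mathcal{A}^t$, and $P \in [0,1]$ because each factor lies in $[0,1]$. Expanding the reward on the enlarged set gives $u(\bm{\mu},\{v\}\cup\mathcal{A}^t) = 1 - (1-\mu_v)P - \eta(|\mathcal{A}^t|+1)$, and subtracting $u(\bm{\mu},\mathcal{A}^t) = 1 - P - \eta|\mathcal{A}^t|$ yields $\Delta(\bm{\mu},\{v\}\,|\,\mathcal{A}^t) = \mu_v P - \eta$. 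The same computation applied to $v'$ gives $\Delta(\bm{\mu},\{v'\}\,|\,\mathcal{A}^t) = \mu_{v'} P - \eta$, using the \emph{same} $P$ because $\mathcal{A}^t$ is unchanged.

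Finally, subtracting the two expressions gives $\Delta(\bm{\mu},\{v'\}\,|\,\mathcal{A}^t) - \Delta(\bm{\mu},\{v\}\,|\,\mathcal{A}^t) = (\mu_{v'}-\mu_v)\,P$, and since $P\geq 0$ and $\mu_{v'}\geq \mu_v$ by hypothesis, this difference is non-negative, which is exactly the claim. There is essentially no obstacle in this proof; the only subtlety is recognizing that the common factor $P$ depends on $\mathcal{A}^t$ alone (not on the candidate being added), which is precisely what allows one to compare marginal gains across different candidates by simply comparing their $\mu$ values. This monotonicity of marginal gain in $\mu_v$ is exactly what is needed in the larger argument for Proposition~\ref{prop:greedy_opt} to conclude that the greedy selection rule picks the globally optimal next arm at each step.
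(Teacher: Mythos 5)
Your proof is correct and follows the same (essentially only) route the paper intends: the paper's own proof merely asserts the lemma "can be easily verified by definition of reward function and marginal reward," and your computation $\Delta(\bm{\mu},\{v\}\,|\,\mathcal{A}^t)=\mu_v P-\eta$ with $P=\prod_{u\in\mathcal{A}^t}(1-\mu_u)\ge 0$ is exactly that verification, carried out explicitly.
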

	\begin{proof}
		This property can be easily verified by definition of reward function and marginal reward. 
	\end{proof}
	We now consider a special case where the number of replications to select is fixed in advance, i.e. $|\mathcal{A}^t| = b$ where $b$ is a constant. In this case, $\eta\cdot c |\mathcal{A}^t|$ is a constant and the solution to problem \textbf{P2} is to select $b$ replications that have the highest expected quality:
	\begin{align}
	\tilde{v}_k = \argmax_{v\in\mathcal{V}^t\backslash\{\cup^{k-1}_{i=1}\tilde{v}_i\}} \mu_v,~~~k = 1,2,\dots,b
	\end{align}
	In addition, the SeVs selected by the greedy algorithm can be rewritten as: 
	\begin{align}
	v_k = \argmax_{v\in\mathcal{V}^t\backslash\{\bigcup^{k-1}_{i=1}v_i\}} \Delta(\bm{\mu}, \{v\}|\cup^{k-1}_{i=1}v_i), ~~~k = 1,2,\dots,b
	\end{align}
	From Lemma \ref{lemma}, we know that the two sequences $\{\tilde{v}_k\}^b_{k=1}$ and $\{v_k\}^b_{k=1}$ are identical. 
	Next, we need the determine the number of replications to maximize the reward. Note that the reward function can be written as a sum of marginal rewards: $u(\bm{\mu},\cup^{b}_{k=1}v_k) = \sum_{k=1}^{b} \Delta(\bm{\mu},\{v_k\}|\mathcal{A}_{k-1})$. Moreover, by following the algorithm design, we will have $\Delta(\bm{\mu},\{v_i\} | \mathcal{A}_{i-1})\geq\Delta(\bm{\mu},\{v_j\} | \mathcal{A}_{j-1}), \forall i < j$. Therefore, to maximize the reward, the greedy algorithm should stop at the $k$-th iteration if $\Delta(\bm{\mu},\{v_k\}|\mathcal{A}_{k-1})\leq 0$. We now can conclude that the greedy algorithm is able to achieve the optimal solution for the problem \textbf{P2}.
\end{proof}

\section{Proof of Proposition \ref{theo:upper_bound}}\label{proof:theo:upper_bound}
Before proceeding, we first define some auxiliary variables. For each hypercube $p\in\mathcal{P}_{T}$, we define $\bar{\mu}(p)=\sup_{\phi \in p}\mu(\phi)$ and $\ubar{\mu}(p)=\inf_{\phi \in p}\mu(\phi)$ be the best and worst expected quality over all contexts $\phi \in p$. In some steps of the proofs, we need to compare the qualities at different positions in a hypercube. As a point of reference, we define the context at (geometrical) center of a hypercube $p$ as $\tilde{\phi}_p$ and its expected quality $\tilde{\mu}(p) = \mu(\tilde{\phi}_p)$. Given the replication set $\mathcal{V}^t = \{1,2,\dots, V^t\}$, context set $\bm{\phi}^t = \{\phi^t_1,\phi^t_2, \dots, \phi^t_{V^t}\}$, and the hypercube set $\p^t = \{p^t_1,p^t_2, \dots, p^t_{V^t}\}$ for each task $t$, let 
\begin{align*}
\bar{\bm{\mu}}^t = [\bar{\mu}(p^t_1),\dots,\bar{\mu}(p^t_{V^t})],\\
\ubar{\bm{\mu}}^t = [\ubar{\mu}(p^t_1),\dots,\ubar{\mu}(p^t_{V^t})],\\
\tilde{\bm{\mu}}^t = [\tilde{\mu}(p^t_1),\dots,\tilde{\mu}(p^t_{V^t})].
\end{align*}
In addition, for a task $t$, we define a replication set $\tilde{\mathcal{A}}^{t}$ satisfying:
\begin{align}\label{eq:tilde_optimal}
\tilde{\mathcal{A}}^{t} = \argmax_{\mathcal{A} \subseteq \mathcal{V}^t, |\mathcal{A}| \leq b^t} u(\tilde{\bm{\mu}}^t, \mathcal{A}) 
\end{align}
The replication set $\tilde{\mathcal{A}}^{t}$ is used to identify subsets of replications which are bad to select. Let 
\begin{align}
\mathcal{L}^t=\left\{ G \subseteq \mathcal{V}^t, |G| \leq b^t: u(\ubar{\bm{\mu}}^t, \tilde{\mathcal{A}}^t ) - u(\bar{\bm{\mu}}^t, G ) \geq At^\theta \right\}
\end{align} 
be the \emph{set of suboptimal subsets of arms} for hypercube set $\p^t$, where $A>0$ and $\theta<0$ are parameters used only in the regret analysis. We call a subset $G$ of replication in $\mathcal{L}^t$ is \emph{suboptimal} for $\p^t$, since the sum of the worst expected reward in $\tilde{\mathcal{A}}^{t}$ is at least an amount $At^\theta$ higher than the sum of the best expected reward for subset $G$. We call subsets in $\mathcal{A}^t_{b^-}\backslash\mathcal{L}^t$ \emph{near-optimal} for $\p^t$. Here, $\mathcal{A}^t_{b^-}$ denotes the set of all element subsets with element number less than $b^t$. Then, the expected regret $R(T)$ can be divided into the following three summands:
\begin{align}
R(T) = \mathbb{E}[R_e(T)]+\mathbb{E}[R_s(T)]+\mathbb{E}[R_n(T)]
\end{align}
where the term $\mathbb{E}[R_e(T)]$ is the regret due to exploration phases and the terms $\mathbb{E}[R_s(T)]$ and $\mathbb{E}[R_n(T)]$ both correspond to regret in exploitation phases: $\mathbb{E}[R_s(T)]$ is the regret due to suboptimal choices, i.e., the subsets of replications from $\mathcal{L}^t$ are selected; $\mathbb{E}[R_n(T)]$ is the regret due to near-optimal choices, i.e., the subsets of replications from $\mathcal{A}^t_{b^-}\backslash\mathcal{L}^t$ are selected. In the following, we prove that each of the three summands is bounded.

We first give a bound of $\mathbb{E}[R_e(T)]$, which depends on the choice of two parameters $z$ and $\gamma$.

\begin{lemma}[Bound for $\mathbb{E}(R_e(T))$] \label{lemma:bound_R_e}
	Let $K(t)=t^{z}\log(t)$ and $h_{T}=\lceil T^{\gamma} \rceil$, where $0<z<1$ and $0<\gamma<\frac{1}{D}$. If the algorithm is run with these parameters, the regret $\mathbb{E}[R_e(T)]$ is bounded by 
	\begin{align}
	\mathbb{E}[R_e(T)]\leq (1+\eta B)2^D\left(T^{z+\gamma D}\log(T) + T^{\gamma D}\right)
	\end{align}
\end{lemma}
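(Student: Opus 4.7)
The plan is to decompose $\mathbb{E}[R_e(T)]$ as (maximum per-task regret during exploration) times (total number of exploration phases over $T$ tasks), and bound each factor separately. For the per-task factor, since $r(\mathcal{A}^t) \in [-\eta B, 1]$ deterministically by \eqref{eq:reward_func_A} with $B = \max_t b^t$, the one-shot loss $r(\mathcal{A}^{*,t}) - r(\mathcal{A}^t)$ is at most $1+\eta B$. So each exploration phase contributes at most $1+\eta B$ to $R_e(T)$, and it suffices to bound the number of exploration phases $N_e(T)$.

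For $N_e(T)$, I would first note that the partition $\mathcal{P}_T$ contains at most $(h_T)^D \le (T^\gamma + 1)^D \le 2^D T^{\gamma D}$ hypercubes (using $T^\gamma \ge 1$). The heart of the argument is a charging scheme: each exploration phase is charged to the increment of some counter $C(p)$ that occurs while hypercube $p$ is still under-explored. By inspecting Algorithm \ref{alg:ccmab}, every exploration phase selects at least $\min(b^t, z) \ge 1$ replications from $\mathcal{V}^{\text{ue},t}$ and then, on Line \ref{line:counter_update_o}, increments the counter of the corresponding under-explored hypercube.

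To bound the number of times a fixed hypercube $p$ can be charged in this way, I will exploit the monotonicity of $K(t)$: once $C(p)$ exceeds $K(T) = T^z \log T \ge K(t)$, $p$ can never again satisfy the under-explored condition $C^t(p) \le K(t)$. Hence, starting from $C(p) = 0$, the counter $C(p)$ may be incremented while $p$ is under-explored at most $\lfloor K(T) \rfloor + 1 \le T^z \log T + 1$ times. Summing across the at most $2^D T^{\gamma D}$ hypercubes and applying the charging argument gives $N_e(T) \le 2^D T^{\gamma D}(T^z \log T + 1)$, and multiplying by $1+\eta B$ yields the stated bound.

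The main obstacle I expect is nailing down the charging step cleanly. One must verify that in both subcases of the exploration branch of Algorithm \ref{alg:ccmab}---the case $|\mathcal{V}^{\text{ue},t}| \ge b^t$, where all $b^t$ selections come from under-explored hypercubes, and the case $|\mathcal{V}^{\text{ue},t}| < b^t$, where all $z \ge 1$ under-explored replications are selected and then $b^t - z$ additional greedy picks are made---at least one counter of a currently under-explored hypercube is incremented per exploration phase. This is immediate from the pseudocode but is the linchpin that makes the per-hypercube counting argument above valid; everything else reduces to the elementary bookkeeping already outlined.
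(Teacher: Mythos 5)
Your proposal is correct and follows essentially the same route as the paper's proof: bound the per-phase loss by $1+\eta B$, count at most $K(T)+1$ charged exploration phases per hypercube, multiply by the at most $2^D T^{\gamma D}$ hypercubes in $\mathcal{P}_T$. Your explicit charging argument merely makes precise the step the paper states as ``there can be at most $\lceil T^z\log(T)\rceil$ exploration phases in which replications in $p$ are selected due to under-exploration of $p$.''
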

\begin{proof}[Proof of Lemma \ref{lemma:bound_R_e}]
	Suppose the algorithm enters the exploration phase for task $t$ and let $\p^t=(p^t_{v})_{v\in\mathcal{V}^t}$ be the hypercubes of currently available replications. Then, based on the design of DATE-V, the set of under-explored hypercubes $\mathcal{P}^{\text{ue},t}_T$ is non-empty, i.e., there exists at least one replication with context $\phi^t_v$, such that a hypercube $p$ satisfying $\phi^t_v \in p$ has $C^t(p)\leq K(t)=t^{z}\log(t)$. Clearly, there can be at most $\lceil T^{z} \log(T) \rceil$ exploration phases in which replications in $p$ are selected due to under-exploration of $p$. Since there are $(h_T)^{D}$ hypercubes in the partition, there can be at most $(h_{T})^{D}\lceil T^{z} \log(T) \rceil$ exploration phases. Notice that the maximum achievable reward of a replication decision $\mathcal{A}^t$ is bounded by $(1-\eta)$ and the minimum achievable reward is $-B\eta$, where $B$ is the maximum possible budget for a task. The maximum regret of wrong replication selection in one exploration phase is bounded by $1+\eta(B-1) < 1+\eta B$. Therefore, we have
	\begin{align*}
	\mathbb{E}[R_e(T)] & \leq  (1+\eta B) (h_{T})^{D}\lceil T^{z}\log(T)\rceil\\
	& =  (1+\eta B) \lceil T^\gamma\rceil^{D}\lceil T^{z}\log(T)\rceil
	\end{align*} 
	Using $\lceil T^{\gamma}\rceil^{D} \leq (2T^{\gamma})^{D} =2^{D}T^{\gamma D}$, it holds 
	\begin{align}
	\mathbb{E}[R_e(T)] & \leq (1+\eta B)2^{D}T^{\gamma D}\left(T^z\log(T) +1\right)\nonumber\\
	& = (1+\eta B)2^D\left(T^{z+\gamma D}\log(T) + T^{\gamma D}\right)
	\end{align}
\end{proof}
Next, we give a bound for $\mathbb{E}[R_s(T)]$. This bound also depends on the choice of two parameters $z$ and $\gamma$. Additionally, a condition on these parameters has to be satisfied.

\begin{lemma}[Bound for  $\mathbb{E}(R_s(T))$] \label{lemma:bound_R_s}
	Let $K(t)=t^{z}\log(t)$ and $h_{T}=\lceil T^{\gamma} \rceil$, where $0<z<1$ and $0<\gamma<\frac{1}{D}$. If the algorithm is run with these parameters, Assumption \ref{holder} holds true and the additional condition $2H(t)+ 2BLD^{\frac{\alpha}{2}}h_T^{-\alpha}\leq At^\theta$ is satisfied for all $1\leq t\leq T$ where $H(t) = Bt^{-\frac{z}{2}}$, the regret $\mathbb{E}[R_s(T)]$ is bounded by 
	\begin{align}
	\mathbb{E}[R_s(T)] \leq (1+\eta B)\left(\sum_{k=1}^B{V^{\max} \choose k}\right) \frac{\pi^2}{3}
	\end{align}
\end{lemma}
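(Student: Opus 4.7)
The plan is to show that in an exploitation round a suboptimal subset $G\in\mathcal{L}^t$ can be picked only when the empirical averages $\hat{\mu}(p)$ deviate significantly from their expectations, and then to bound those failure probabilities by Hoeffding's inequality and sum over $t$. Since the per-task reward lies in $[-\eta B,1]$, each suboptimal exploitation round loses at most $1+\eta B$, so
\begin{align*}
\mathbb{E}[R_s(T)]\;\leq\;(1+\eta B)\sum_{t=1}^{T}\Pr\!\big(\exists\,G\in\mathcal{L}^t\ \text{is selected at}\ t\big),
\end{align*}
and it suffices to control each summand.

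Next, define the good event $\mathcal{E}_t=\{|\hat{\mu}(p)-\mathbb{E}[\hat{\mu}(p)]|\leq t^{-z/2}\ \forall\,p\in\p^t\}$ and show that no $G\in\mathcal{L}^t$ is ever selected on $\mathcal{E}_t$. Because each hypercube has diameter at most $\sqrt{D}/h_T$, the H\"older condition gives $|\mathbb{E}[\hat{\mu}(p)]-\tilde{\mu}(p)|\leq LD^{\alpha/2}h_T^{-\alpha}$; and since $\partial u/\partial\mu_v=\prod_{v'\neq v}(1-\mu_{v'})\in[0,1]$, the reward $u(\cdot,\mathcal{A})$ is $1$-Lipschitz in each coordinate. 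On $\mathcal{E}_t$ these two facts combine to
\begin{align*}
|u(\hat{\bm{\mu}}^t,\mathcal{A})-u(\tilde{\bm{\mu}}^t,\mathcal{A})|\;\leq\;Bt^{-z/2}+BLD^{\alpha/2}h_T^{-\alpha}\;=\;H(t)+BLD^{\alpha/2}h_T^{-\alpha}
\end{align*}
for every $\mathcal{A}$ with $|\mathcal{A}|\leq B$. If Greedy selects $G$, Proposition~\ref{prop:greedy_opt} gives $u(\hat{\bm{\mu}}^t,G)\geq u(\hat{\bm{\mu}}^t,\tilde{\mathcal{A}}^t)$, so subtracting the displayed estimate evaluated at $G$ and at $\tilde{\mathcal{A}}^t$ yields $u(\tilde{\bm{\mu}}^t,\tilde{\mathcal{A}}^t)-u(\tilde{\bm{\mu}}^t,G)\leq 2H(t)+2BLD^{\alpha/2}h_T^{-\alpha}$. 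On the other hand, $\ubar{\mu}(p)\leq\tilde{\mu}(p)\leq\bar{\mu}(p)$ together with coordinate-wise monotonicity of $u$ gives $u(\tilde{\bm{\mu}}^t,\tilde{\mathcal{A}}^t)-u(\tilde{\bm{\mu}}^t,G)\geq u(\ubar{\bm{\mu}}^t,\tilde{\mathcal{A}}^t)-u(\bar{\bm{\mu}}^t,G)\geq At^{\theta}$ whenever $G\in\mathcal{L}^t$, contradicting the hypothesis $2H(t)+2BLD^{\alpha/2}h_T^{-\alpha}\leq At^{\theta}$.

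Finally, I would bound $\Pr(\mathcal{E}_t^{c})$ by Hoeffding's inequality. Since $C^t(p)>K(t)=t^{z}\log t$ in the exploitation phase,
\begin{align*}
\Pr\!\big(|\hat{\mu}(p)-\mathbb{E}[\hat{\mu}(p)]|>t^{-z/2}\big)\;\leq\;2\exp\!\big(-2K(t)t^{-z}\big)\;=\;2t^{-2}.
\end{align*}
A union bound over all non-empty subsets $G\subseteq\mathcal{V}^t$ with $|G|\leq B$, of which there are at most $\sum_{k=1}^{B}\binom{V^{\max}}{k}$, yields $\Pr(\mathcal{E}_t^{c})\leq 2\sum_{k=1}^{B}\binom{V^{\max}}{k}\,t^{-2}$. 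Summing over $t$ and using $\sum_{t\geq 1}t^{-2}\leq\pi^{2}/6$ then produces $\mathbb{E}[R_s(T)]\leq (1+\eta B)\sum_{k=1}^{B}\binom{V^{\max}}{k}\,\pi^{2}/3$, exactly as claimed.

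The main obstacle is the second step: carefully propagating the Lipschitz and H\"older bounds through the product structure of $u$, and pivoting between the $(\ubar{\bm{\mu}}^t,\bar{\bm{\mu}}^t)$ definition of $\mathcal{L}^t$ and the center-referenced comparison against $\tilde{\bm{\mu}}^t$ so that the slack $At^{\theta}$ is matched exactly by $2H(t)+2BLD^{\alpha/2}h_T^{-\alpha}$.
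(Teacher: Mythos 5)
Your proof is correct and rests on the same three pillars as the paper's: the per-round loss cap of $1+\eta B$, Chernoff--Hoeffding concentration using $C^t(p)>t^z\log t$ samples in an exploitation round, and the H\"older condition combined with the hypothesis $2H(t)+2BLD^{\alpha/2}h_T^{-\alpha}\leq At^\theta$ to show that an accurate estimate cannot rank a suboptimal $G$ above $\tilde{\mathcal{A}}^t$. The organization differs: the paper analyzes each $G\in\mathcal{L}^t$ separately, splitting $\{u(\hat{\bm{\mu}}^t,G)\geq u(\hat{\bm{\mu}}^t,\tilde{\mathcal{A}}^t)\}$ into three events $E_1,E_2,E_3$, bounding $\Pr\{E_1\},\Pr\{E_2\}\leq Bt^{-2}$ and showing $\Pr\{E_3\}=0$, then multiplying by $|\mathcal{L}^t|\leq\sum_{k}\binom{V^{\max}}{k}$; you instead condition on a single good event $\mathcal{E}_t$ over hypercubes and show deterministically that no suboptimal subset is selected on it, pivoting through the center qualities $\tilde{\bm{\mu}}^t$ rather than the paper's $\bm{\mu}^{\text{best}},\bm{\mu}^{\text{worst}}$. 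Your version is somewhat cleaner and makes explicit the coordinate-wise $1$-Lipschitz/monotone structure of $u$ that the paper only uses implicitly (via its inclusion $\{u(\hat{\bm{\mu}}^t,G)\geq u(\bar{\bm{\mu}}^t,G)+H(t)\}\subseteq\{\exists v\in G:\hat{\mu}(p^t_v)\geq\bar{\mu}(p^t_v)+H(t)/B\}$). Two minor remarks: your ``union bound over subsets $G$'' is a slight misstatement --- $\mathcal{E}_t^c$ only requires a union over the at most $V^{\max}$ hypercubes in $\p^t$, which would give the tighter factor $V^{\max}$ in place of $\sum_{k=1}^B\binom{V^{\max}}{k}$; since $V^{\max}\leq\sum_{k=1}^B\binom{V^{\max}}{k}$ the stated bound still holds, and inflating it merely matches the paper's constant. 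Also, like the paper, you apply Hoeffding to an adaptively collected sample of random size $C^t(p)$ without comment; this is standard MAB practice but worth being aware of.
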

\begin{proof}[Proof of Lemma \ref{lemma:bound_R_s}]
	For $1 \leq t \leq T$, let $W^t=\{\mathcal{P}^{\text{ue},t}=\emptyset\}$ be the even that slot $t$ is an exploitation phase. By the definition of $\mathcal{P}^{\text{ue},t}$, in this case, it holds that $C^t(p^t_v)>K(t) = t^{z}\log(t), \forall p^t_v \in \p^t$. Let $V_G^t$ be the event that subset $G\in\mathcal{L}^t$ is selected at for task $t$. Then, it holds that 
	\begin{align}
	R_s(T)= & \sum_{t=1}^{T}\sum_{G\in\mathcal{L}^t(\p^t)} I_{\{V_G^t, W^t\}} \times \left(r\left(\mathcal{A}^{*,t}\right)-r\left(G\right)\right)
	\end{align}
	where, for each task, the loss due to selecting a suboptimal subset $G\in\mathcal{L}^t$ is considered. Since the maximum regret of selecting $G$ is bounded by $(1+\eta B)$, we have
	\begin{align}
	R_s(T)\leq (1+\eta B)\sum_{t=1}^{T}\sum_{G\in\mathcal{L}^t} I_{\{V_G^t, W^t\}},
	\end{align}
	and taking the exception, the regret is hence bounded by
	\begin{align}
	\mathbb{E}[R_s(T)] &\leq (1+\eta B)\sum_{t=1}^{T}\sum_{G\in\mathcal{L}^t} \mathbb{E}\left[I_{\{V_G^t, W^t\}}\right] \nonumber\\
	& = (1+\eta B)\sum_{t=1}^{T}\sum_{G\in\mathcal{L}^t} \Pr \left\{V_G^t, W^t\right\}
	\end{align}
	
	In the event of $V^t_G$, by the design of the algorithm, this means that with the estimated replication quality, the rewards of selecting replications in $G$ is at least as high as the reward of selecting replications in $\tilde{\mathcal{A}}^t$, i.e., $u(\hat{\bm{\mu}}^t,G) \geq u(\hat{\bm{\mu}}^t,\tilde{\mathcal{A}}^t)$. Thus, we have:
	\begin{align} \label{prob_subopt}
	\Pr\left\{V_G^t,W^t\right\} \leq \Pr\left\{u(\hat{\bm{\mu}}^t,G) \geq u(\hat{\bm{\mu}}^t,\tilde{\mathcal{A}}^t)\right\}
	\end{align}
	
	The event in the right-hand side of \eqref{prob_subopt} implies at lease one of the three following events for any $H(t)>0$:
	\begin{align*}
	E_1= \left\{u(\hat{\bm{\mu}}^t,G) \geq u(\bar{\bm{\mu}}^t,G)+H(t), W^t\right\}
	\end{align*}
	\begin{align*}
	E_2= \left\{u(\hat{\bm{\mu}}^t,\tilde{\mathcal{A}}^t) \leq u(\ubar{\bm{\mu}}^t,\tilde{\mathcal{A}}^t)-H(t), W^t\right\}
	\end{align*}
	\begin{align*}
	E_3 = & \left\{u(\hat{\bm{\mu}}^t,G) \geq u(\hat{\bm{\mu}}^t,\tilde{\mathcal{A}}^t), u(\hat{\bm{\mu}}^t,G) < u(\bar{\bm{\mu}}^t,G)+H(t), \right.\\
	&\quad \left.u(\hat{\bm{\mu}}^t,\tilde{\mathcal{A}}^t) > u(\ubar{\bm{\mu}}^t,\tilde{\mathcal{A}}^t)-H(t), W^t\right\}.
	\end{align*}
	
	Hence, we have for the original event in \eqref{prob_subopt}
	\begin{align}\label{ori_E123}
	\left\{u(\hat{\bm{\mu}}^t,G) \geq u(\hat{\bm{\mu}}^t,\tilde{\mathcal{A}}^t)\right\}\subseteq E_1 \cup E_2 \cup E_3
	\end{align}
	
	The probability of the three event $E_1$, $E_2$, and $E_3$ will be bounded separately. Let start by bounding $E_1$. Recall that the best expected quality of replications in set $p$ is $\bar{\mu}(p)=\sup_{\phi \in p} \bar{\mu}(\phi)$. Therefore, the expected quality of replication $v$ in $G$ is bounded by
	\begin{align}
	\mathbb{E}\left[\hat{\mu}(p^t_v)\right]= & \mathbb{E}\left[\frac{1}{C^t(p^t_v)} \sum_{(\tau,k): \phi^\tau_k\in p^\tau_v, k\in\mathcal{A}^\tau} q(\phi^\tau_k)\right] \nonumber\\
	= & \frac{1}{C^t(p^t_v)|} \underbrace{\sum_{(\tau,k): \phi^\tau_k\in p^\tau_v, k\in\mathcal{A}^\tau}}_{C^t(p^t_v) \text{summands}} \underbrace{\mu(\phi^\tau_k)}_{\leq \bar{\mu}(p^t_v)} \nonumber\\
	\leq &  \bar{\mu}(p^t_v)
	\end{align}
	
	This implies 
	\begin{align}
	\text{Prob}\{E_1\}\nonumber &=\Pr\left\{u(\hat{\bm{\mu}}^t,G) \geq  u(\bar{\bm{\mu}}^t,G) + H(t), W^t\right\}\nonumber\\
	&\leq \Pr\left\{\hat{\mu}(p^t_v) \geq \bar{\mu}(p^t_v) + \frac{H(t)}{B}, \exists v\in G,  W^t\right\}\nonumber\\
	&\leq \Pr\left\{\hat{\mu}(p^t_v) \geq \mathbb{E}\left[\hat{\mu}(p^t_v)\right] + \frac{H(t)}{B}, \exists v\in G,  W^t\right\}\nonumber\\
	& = \sum_{v\in G}\Pr\left\{\hat{\mu}(p^t_v) \geq  \mathbb{E}\left[\hat{\mu}(p^t_v)\right]+ \frac{H(t)}{B}, W^t\right\}\nonumber
	\end{align}
	where $ B = \max\limits_{t=1,\dots,T} b^t$ the maximum budget a replication could have. The first inequality comes from the fact that
	\begin{align*}
	&\left\{G \subseteq \mathcal{V}^t~|~u(\hat{\bm{\mu}}^t,G) \geq  u(\bar{\bm{\mu}}^t,G) + H(t)\right\} \\ \subseteq &\left\{G \subseteq \mathcal{V}^t~|~\hat{\mu}(p^t_v) \geq \bar{\mu}(p^t_v) + \frac{H(t)}{B}, \exists v\in G\right\}
	\end{align*}
	which can be easily verified by \emph{reductio ad absurdum} with the expected reward function. Now, applying Chernoff-Hoeffding bound \cite{hoeffding1963probability} (note that for each replication, the estimated quality is bounded by $1$ and exploiting that event $W^t$ implies that at least $t^{z}\log(t)$ samples were drawn, we get
	\begin{align}\label{eq:prob_E_1}
	\Pr\{E_1\} &\leq \sum_{v\in G} \text{Prob}\left\{\hat{\mu}(p^t_v) -\mathbb{E}\left[\hat{\mu}(p^t_v)\right] \geq \frac{H(t)}{B}, W(t)\right\}\nonumber\\
	&\leq \sum_{v\in G} \exp\left(\dfrac{-2C^t(p^t_v)H(t)^2}{B^2}\right)\nonumber\\
	&\leq \sum_{v\in G} \exp\left(\dfrac{-2H(t)^2t^{z}\log(t)}{B^2}\right)
	\end{align}
	
	Analogously, it can be proven for event $E_2$, that
	\begin{align}\label{eq:prob_E_2}
	\Pr\{E_2\} &= \text{Prob}\left\{u(\hat{\bm{\mu}}^t,\tilde{\mathcal{A}}^t) \geq u(\ubar{\bm{\mu}}^t,\tilde{\mathcal{A}}^t)-H(t), W^t\right\}\nonumber \\
	&\leq \sum_{v\in \mathcal{\tilde{\mathcal{A}}}^{t}} \exp\left(\dfrac{-2H(t)^2t^{z}\log(t)}{B^2}\right)
	\end{align}
	
	To bound the event $E_3$, we first make some additional definitions. First, we rewrite the estimate $\hat{\mu}(p), p\in \mathcal{P}_T$ as follows:
	\begin{align*}
	\hat{\mu}(p)&=\dfrac{1}{C^t(p)}\sum_{(\tau,k):\phi^\tau_k\in p, k \in \mathcal{A}^\tau} q(\phi^\tau_k)\\
	&= \dfrac{1}{C^t(p)}\sum_{(\tau,k):\phi^\tau_k\in p, k\in\mathcal{A}^\tau} \mu(\phi^\tau_{k})+\epsilon^\tau_{k}
	\end{align*}
	where $\epsilon^\tau_{k}$ denotes the deviation from the expected quality of a replication $k \in \mathcal{A}^\tau$ with context $\phi^\tau_{k}$. Additionally, we define the best and worst context for a hypercube $p\in\mathcal{P}_{T}$ as $\phi^{\text{best}}(p) \triangleq \argmax_{\phi \in p} \mu(\phi)$ and $\phi^{\text{worst}}(p) \triangleq \argmin_{ \phi \in p} \mu(\phi)$, respectively. Finally, we define the best and worst quality of a replication in hypercube $p$ as 
	\begin{align}
	\mu^{\text{best}}(p)= \dfrac{1}{C^t(p)}\sum_{(\tau,k):\phi^\tau_k \in p, k \in \mathcal{A}^\tau} \mu(\phi^{\text{best}}(p))+\epsilon^\tau_{k} \label{d_best}\\
	\mu^{\text{worst}}(p)= \dfrac{1}{C^t(p)}\sum_{(\tau,k):\phi^\tau_k\in p, k \in \mathcal{A}^\tau} \mu(\phi^{\text{worst}}(p))+\epsilon^\tau_{k} \label{d_worst}
	\end{align}
	Let $\bm{\mu}^{\text{best},t} = \left[\mu^{\text{best}}(p^t_1), \mu^{\text{best}}(p^t_2),\dots, \mu^{\text{best}}(p^t_{V^t})\right]$ and $\bm{\mu}^{\text{worst},t} = \left[\mu^{\text{worst}}(p^t_1), \mu^{\text{worst}}(p^t_2),\dots, \mu^{\text{worst}}(p^t_{V^t})\right]$.
	
	By H\"{o}lder condition from Assumption \ref{holder}, since $\phi^{\text{best}}(p) \in p$ and only contexts from hypercube $p$
	are used for calculating the estimated quality $\hat{\mu}(p)$, it can be shown that 
	\begin{align}\label{holder_best}
	\mu^{\text{best}}(p)-\hat{\mu}(p) \leq LD^{\frac{\alpha}{2}}h^{-\alpha}_{T}
	\end{align}
	holds. Analogously, we have 
	\begin{align}\label{holder_worst}
	\hat{\mu}(p)-\mu^{\text{worst}}(p) \leq LD^{\frac{\alpha}{2}}h^{-\alpha}_{T}
	\end{align}
	Applying \eqref{holder_best} and \eqref{holder_worst} to replication in $G$ and $\tilde{\mathcal{A}}^{t}(\p^t)$, we have:
	\begin{align}
	u(\bm{\mu}^{\text{best},t},G) - u(\hat{\bm{\mu}}^t,G) \leq & \sum_{v \in G} \left(\mu^{\text{best}}(p^t_v)-\hat{\mu}(p^t_v)\right) \nonumber \\ \leq &BLD^{\frac{\alpha}{2}}h^{-\alpha}_{T} \label{eq:G_best_esti} 
	\end{align}
	\begin{align}
	u(\hat{\bm{\mu}}^t,\tilde{\mathcal{A}}^{t}) - u(\bm{\mu}^{\text{worst},t},\tilde{\mathcal{A}}^{t}) \leq & \sum_{v \in \tilde{\mathcal{A}}^{t}}\left(\hat{\mu}(p^t_v)-\mu^{\text{worst}}(p^t_v)\right) \nonumber\\ \leq &BLD^{\frac{\alpha}{2}}h^{-\alpha}_{T}\label{eq:S*_best_esti}
	\end{align}
	Now the three components of event $E_3$ are considered separately. By definition of $\mu^{\text{best}}(p)$ and $\mu^{\text{worst}}(p)$ in \eqref{d_best} and \eqref{d_worst}. The first component of $E_3$ holds that
	\begin{align}\label{E_3_1}
	\left\{u(\hat{\bm{\mu}}^t,G) \geq  u(\hat{\bm{\mu}}^t,\tilde{\mathcal{A}}^{t})\right\}
	\subseteq \left\{u(\bm{\mu}^{\text{best},t},G) \geq u(\bm{\mu}^{\text{worst},t},\tilde{\mathcal{A}}^{t})\right\}
	\end{align}
	
	For the second component, using \eqref{eq:G_best_esti}, we have
	\begin{align}\label{E_3_2}
	&\left\{u(\hat{\bm{\mu}}^t,G) < u(\bar{\bm{\mu}}^t,G)+H(t)\right\} \nonumber\\
	\subseteq & \left\{u(\bm{\mu}^{\text{best},t},G) - BLD^{\frac{\alpha}{2}}h^{-\alpha}_T < u(\bar{\bm{\mu}}^t,G) + H(t)\right\} \nonumber\\
	= & \left\{u(\bm{\mu}^{\text{best},t},G) < u(\bar{\bm{\mu}}^t,G) + BLD^{\frac{\alpha}{2}}h^{-\alpha}_T + H(t)\right\}
	\end{align}
	
	For the third component, we have 
	\begin{align}\label{E_3_3}
	&\left\{u(\hat{\bm{\mu}}^t,\tilde{\mathcal{A}}^t) > u(\ubar{\bm{\mu}}^t,\tilde{\mathcal{A}}^t)-H(t)\right\} \nonumber\\
	\subseteq & \left\{u(\bm{\mu}^{\text{worst},t},\tilde{\mathcal{A}}^{t})+BLD^{\frac{\alpha}{2}}h^{-\alpha}_{T} > u(\ubar{\bm{\mu}}^t,\tilde{\mathcal{A}}^t) - H(t)\right\} \nonumber\\
	= & \left\{u(\bm{\mu}^{\text{worst},t},\tilde{\mathcal{A}}^{t}) > u(\ubar{\bm{\mu}}^t,\tilde{\mathcal{A}}^t) - BLD^{\frac{\alpha}{2}}h^{-\alpha}_{T} - H(t)\right\}
	\end{align}
	
	Therefore, using \eqref{E_3_1}, \eqref{E_3_2} and \eqref{E_3_3}, the probability of event $E_3$ is bounded by 
	\begin{align}\label{E_3_prob_bound}
	~\Pr&\{E_3\}\\
	\leq~\Pr&\left\{W^t, u(\bm{\mu}^{\text{best},t},G) \geq u(\bm{\mu}^{\text{worst},t},\tilde{\mathcal{A}}^{t}), \right.\nonumber\\
	&~u(\bm{\mu}^{\text{best},t},G) < u(\bar{\bm{\mu}}^t,G) + BLD^{\frac{\alpha}{2}}h^{-\alpha}_T + H(t) \nonumber\\
	&~\left. u(\bm{\mu}^{\text{worst},t},\tilde{\mathcal{A}}^{t}) > u(\ubar{\bm{\mu}}^t,\tilde{\mathcal{A}}^t) - BLD^{\frac{\alpha}{2}}h^{-\alpha}_{T} - H(t) \right\}.\nonumber
	\end{align}
	
	We want to find a condition under which the probability for $E_3$ is zero. For this purpose, it is sufficient to show that the probability for the right-hand side in \eqref{E_3_prob_bound} is zero. Suppose that the following condition is satisfied:
	\begin{align}\label{eq:condition_E3}
	2H(t)+2BLD^{\frac{\alpha}{2}}h^{-\alpha}_{T} \leq At^\theta
	\end{align}
	
	Since $G\in\mathcal{L}^t$, we have $u(\ubar{\bm{\mu}}^t, \tilde{\mathcal{A}}^t) - u(\bar{\bm{\mu}}^t, G) \geq A t^\theta$, which together with \eqref{eq:condition_E3} implies that
	\begin{align}
	u(\ubar{\bm{\mu}}^t, \tilde{\mathcal{A}}^t) - u(\bar{\bm{\mu}}^t, G) -\left(2H(t)+2BLD^{\frac{\alpha}{2}}h^{-\alpha}_{T}\right)\geq 0
	\end{align}
	Rewriting yields
	\begin{align}\label{condition_rewr}
	& u(\ubar{\bm{\mu}}^t, \tilde{\mathcal{A}}^t) - H(t)-BLD^{\frac{\alpha}{2}}h^{-\alpha}_{T} \nonumber \\ \geq & u(\bar{\bm{\mu}}^t, G) + H(t)+BLD^{\frac{\alpha}{2}}h^{-\alpha}_{T}
	\end{align}
	
	If \eqref{condition_rewr} holds true, the three components of the right-hand side in \eqref{E_3_prob_bound} cannot be satisfied at the same time: Combining the second and third component of \eqref{E_3_prob_bound} with \eqref{condition_rewr} yields $u(\bm{\mu}^{\text{best},t},G) < u(\bm{\mu}^{\text{worst},t},\tilde{\mathcal{A}}^{t})$, which contradicts the first term of \eqref{E_3_prob_bound}. Therefore, under condition \eqref{eq:condition_E3}, it follows that $\Pr\{E_3\}=0$.
	
	So far, the analysis was performed with respected to an arbitrary $H(t)>0$. In the remainder of the proof, we choose $H(t)=Bt^{-z/2}$. Then, using \eqref{eq:prob_E_1} and \eqref{eq:prob_E_2}, we have
	\begin{align}
	\text{Prob}\{E_1\} \leq & B\exp\left(\dfrac{-2H(t)^2t^{z}\log(t)}{B^2}\right) \nonumber \\
	\leq & B\exp\left(-2\log(t)\right) \nonumber \\
	\leq & Bt^{-2}
	\end{align}
	and analogously
	\begin{align}
	\text{Prob}\{E_2\} \leq Bt^{-2}
	\end{align}
	
	To sum up, under condition \eqref{eq:condition_E3}, using \eqref{ori_E123}, the probability in \eqref{prob_subopt} is bounded by 
	\begin{align*}
	\Pr\left\{V_G^t,W^t\right\}	\leq &\Pr\left\{E_1\cup E_2\cup E_3\right\} \nonumber\\
	\leq  &\Pr\left\{E_1\right\} + \Pr\left\{E_2\right\} + \Pr\left\{E_3\right\} \nonumber\\ 
	\leq  &2Bt^{-2}
	\end{align*}
	Given this we have:
	\begin{align}
	\mathbb{E}[R_s(T)] \leq & (1+\eta B)\times\sum_{t=1}^{T}\sum_{G\in\mathcal{L}^t} \Pr\left\{V_G^t, W^t\right\} \nonumber \\
	\leq & (1+\eta B)|\mathcal{L}^t|\sum_{t=1}^{T}2Bt^{-2} \nonumber \\
	\leq &  (1+\eta B)|\mathcal{L}^t| 2\sum_{t=1}^{\infty}t^{-2} \nonumber \\
	\leq &  (1+\eta B)|\mathcal{L}^t| \frac{\pi^2}{3}\nonumber\\
	\leq &  (1+\eta B)\left(\sum_{k=1}^B{V^{\max} \choose k}\right) \frac{\pi^2}{3}
	\end{align}
	where $\sum_{k=1}^B{V^{\max} \choose k}$ is maximum possible number of subsets with size less than or equal $B$ where $V^{\max}$ is the .
\end{proof}

Now we give a bound for $\mathbb{E}\left[R_n(T)\right]$.
\begin{lemma}[Bound for  $\mathbb{E}(R_n(T))$] \label{lemma:bound_R_n}
	Let $K(t)=t^{z}\log(t)$ and $h_{T}=\lceil T^{\gamma} \rceil$, where $0<z_n<1$ and $0<\gamma<\frac{1}{D}$. If the algorithm is run with these parameters, Assumption \ref{holder} holds true, the regret $\mathbb{E}[R_n(T)]$ is bounded by 
	\begin{align}
	\mathbb{E}[R_n(T)]\leq 3BLD^{\frac{\alpha}{2}}T^{1-\gamma\alpha}+\dfrac{A}{1+\theta}T^{1+\theta}
	\end{align}
	
\end{lemma}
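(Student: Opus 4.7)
The plan is to bound the per-round regret of any near-optimal selection and then sum over $t$. Fix $t$ and suppose the algorithm is in an exploitation phase and selects a near-optimal subset $G\in\mathcal{A}^t_{b^-}\backslash\mathcal{L}^t$, meaning
\[
u(\ubar{\bm{\mu}}^t,\tilde{\mathcal{A}}^t) - u(\bar{\bm{\mu}}^t, G) < A t^{\theta}.
\]
I want to use this to control the true instantaneous regret $u(\bm{\mu}^t, \mathcal{A}^{*,t}) - u(\bm{\mu}^t, G)$ after adding and subtracting the ``bracket'' quantities $u(\ubar{\bm{\mu}}^t,\tilde{\mathcal{A}}^t)$ and $u(\bar{\bm{\mu}}^t, G)$.

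First, I would establish a Lipschitz-type statement for the reward functional in its $\bm{\mu}$-argument. Because $u(\bm{\mu},\mathcal{A}) = 1-\prod_{v\in\mathcal{A}}(1-\mu_v)-\eta|\mathcal{A}|$ has partial derivatives $\partial u/\partial \mu_v \in [0,1]$, for any $\mathcal{A}$ with $|\mathcal{A}|\leq B$ and any two context-indexed vectors $\bm{\mu},\bm{\mu}'$,
\[
|u(\bm{\mu},\mathcal{A}) - u(\bm{\mu}',\mathcal{A})| \;\leq\; B \max_{v\in\mathcal{A}}|\mu_v-\mu'_v|.
\]
Combining this with the H\"older assumption and the fact that any two contexts inside the same hypercube are at distance at most $\sqrt{D}/h_T$, I get that $|\bar{\mu}(p)-\ubar{\mu}(p)| \leq LD^{\alpha/2}h_T^{-\alpha}$, and hence for any subset $\mathcal{A}\subseteq\mathcal{V}^t$ of size $\leq B$,
\[
|u(\bm{\mu}^t,\mathcal{A}) - u(\bar{\bm{\mu}}^t,\mathcal{A})| \;\leq\; BLD^{\alpha/2}h_T^{-\alpha},\quad |u(\tilde{\bm{\mu}}^t,\mathcal{A}) - u(\ubar{\bm{\mu}}^t,\mathcal{A})| \;\leq\; BLD^{\alpha/2}h_T^{-\alpha}.
\]

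Next, I would chain these estimates with the optimality of $\tilde{\mathcal{A}}^t$ under the center-context vector $\tilde{\bm{\mu}}^t$ (definition \eqref{eq:tilde_optimal}) to obtain
\[
u(\bm{\mu}^t,\mathcal{A}^{*,t}) \;\leq\; u(\tilde{\bm{\mu}}^t,\mathcal{A}^{*,t}) + BLD^{\alpha/2}h_T^{-\alpha} \;\leq\; u(\tilde{\bm{\mu}}^t,\tilde{\mathcal{A}}^t) + BLD^{\alpha/2}h_T^{-\alpha} \;\leq\; u(\ubar{\bm{\mu}}^t,\tilde{\mathcal{A}}^t) + 2BLD^{\alpha/2}h_T^{-\alpha},
\]
and $u(\bar{\bm{\mu}}^t,G) - u(\bm{\mu}^t,G) \leq BLD^{\alpha/2}h_T^{-\alpha}$. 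Putting these together with the definition of near-optimality yields the per-round regret bound
\[
u(\bm{\mu}^t,\mathcal{A}^{*,t}) - u(\bm{\mu}^t,G) \;\leq\; \bigl[u(\ubar{\bm{\mu}}^t,\tilde{\mathcal{A}}^t) - u(\bar{\bm{\mu}}^t,G)\bigr] + 3BLD^{\alpha/2}h_T^{-\alpha} \;<\; A t^{\theta} + 3BLD^{\alpha/2}h_T^{-\alpha}.
\]

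Finally, I sum this bound over all $t\in\{1,\dots,T\}$, which is at most a sum over exploitation rounds that picked near-optimal subsets. Using $h_T \geq T^{\gamma}$ gives the first summand $3BLD^{\alpha/2}T\cdot h_T^{-\alpha}\leq 3BLD^{\alpha/2}T^{1-\gamma\alpha}$, and an integral comparison $\sum_{t=1}^T t^{\theta} \leq T^{1+\theta}/(1+\theta)$ gives the second summand $\tfrac{A}{1+\theta}T^{1+\theta}$, yielding the stated bound. The only delicate step is the three-way chaining in the middle paragraph, where I must check that every application of monotonicity of $u$ in $\bm{\mu}$ and every swap between $\mathcal{A}^{*,t}$ and $\tilde{\mathcal{A}}^t$ preserves an inequality in the correct direction; once that bookkeeping is done the rest of the argument is direct summation.
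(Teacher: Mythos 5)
Your proposal is correct and follows essentially the same route as the paper's proof: the same three-way chaining through $\tilde{\bm{\mu}}^t$, the optimality of $\tilde{\mathcal{A}}^t$ under the center-context vector, and the H\"{o}lder/hypercube-diameter bound $BLD^{\alpha/2}h_T^{-\alpha}$ applied three times, followed by the near-optimality definition and the integral comparison for $\sum_t t^{\theta}$. The only difference is that you make the coordinatewise Lipschitz property of $u(\cdot,\mathcal{A})$ explicit, which the paper leaves implicit; this is a harmless elaboration rather than a different argument.
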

\begin{proof}[Proof of Lemma \ref{lemma:bound_R_n}]
	For $1 \leq t \leq T$, consider the event $W^t$ as in the previous proof, the regret due to near-optimal subsets can be written as
	\begin{align}
	& R_n(T)=\sum_{t=1}^{T}I_{\{W^t,\mathcal{A}^t \in \{\mathcal{A}^t_{b^-}\backslash\mathcal{L}^t\}\}}\times\left( r\left(\mathcal{A}^{*,t}\right)-r\left(\mathcal{A}^t\right)\right)
	\end{align}
	where in each time slot in which the selected subset $\mathcal{A}^t$ is near-optimal, i.e., $\mathcal{A}^t\in\mathcal{A}_{b^-} \backslash\mathcal{L}^t$, the regret is considered for selecting $\mathcal{A}^t$ instead of the $\mathcal{A}^{*,t}$. Let $Q^t =W^t\cap\{\mathcal{A}^t\in\mathcal{A}_{b^-} \backslash \mathcal{L}^t\}$ denotes the event of selecting a near-optimal arm set. Then, we have
	\begin{align*}
	\mathbb{E}\left[R_n(T)\right]=&\sum_{t=1}^{T}\Pr\{Q(t)\} \mathbb{E}\left[r\left(\mathcal{A}^{*,t}\right)-r\left(\mathcal{A}^t\right) \mid Q(t)\right]\\
	\leq &\sum_{t=1}^{T} u\left(\bm{\mu}^t,\mathcal{A}^{*,t}\right)-u\left(\bm{\mu}^t,\mathcal{A}^t\right) 
	\end{align*}
	
	Now, let $t$ be the time slot, where $Q(t)$ holds true, i.e., the algorithm enters an exploitation phase and $J\in\mathcal{A}_{b^-}\backslash\mathcal{L}^t$. By the definition of $\mathcal{P}^{\text{ue},t}$, it holds that $C^t(p^t_{v})>K(t)=t^{z}\log(t)$ for all $p^t_v\in\p^t$. In addition, since $J\in\mathcal{A}_{b^-} \backslash\mathcal{L}^t$, it holds
	\begin{align}
	u(\ubar{\bm{\mu}}^t, \tilde{\mathcal{A}}^t) - u(\bar{\bm{\mu}}^t, J) < At^\theta
	\end{align}
	To bound the regret, we have to give an upper bound on
	\begin{align}
	\sum_{t=1}^{T}\left(u\left(\bm{\mu}^t,\mathcal{A}^{*,t}\right)-u\left(\bm{\mu}^t,J\right)\right)
	\end{align}
	
	Applying H\"{o}lder condition several times yields:
	\begin{align*}
	& u\left(\bm{\mu}^t,\mathcal{A}^{*,t}\right)-u\left(\bm{\mu}^t,J\right)\\
	\leq & u\left(\tilde{\bm{\mu}}^t,\mathcal{A}^{*,t}\right) + BLD^{\frac{\alpha}{2}}h^{-\alpha}_{T} -u\left(\bm{\mu}^t,J\right)\\
	\leq & u\left(\tilde{\bm{\mu}}^t,\tilde{\mathcal{A}}^{t}\right) + BLD^{\frac{\alpha}{2}}h^{-\alpha}_{T} -u\left(\bm{\mu}^t,J\right)\\
	\leq & u\left(\ubar{\bm{\mu}}^t,\tilde{\mathcal{A}}^t\right) + 2BLD^{\frac{\alpha}{2}}h^{-\alpha}_{T} -u\left(\bm{\mu}^t,J\right)\\
	\leq & u\left(\ubar{\bm{\mu}}^t,\tilde{\mathcal{A}}^t\right) + 3BLD^{\frac{\alpha}{2}}h^{-\alpha}_{T} -u\left(\bar{\bm{\mu}}^t,J\right)\\
	\leq & 3BLD^{\frac{\alpha}{2}}h^{-\alpha}_{T}+At^{\theta}
	\end{align*}
	where the third inequality follows the definition of $\tilde{\mathcal{A}}^{t}$. Using $h^{-\alpha}_{T}=\lceil T^{\gamma} \rceil^{-\alpha} \leq T^{-\gamma\alpha}$, we further have
	\begin{align}
	u\left(\bm{\mu}^t,\mathcal{A}^{*,t}\right)-u\left(\bm{\mu}^t,J\right) \leq   3BLD^{\frac{\alpha}{2}}T^{-\alpha \gamma}+At^{\theta}
	\end{align}  
	Therefore, the regret can be bounded by 
	\begin{align}
	\mathbb{E}[R_n(T)] \leq & \sum_{t=1}^{T}\left(3BLD^{\frac{\alpha}{2}}T^{-\alpha \gamma}+At^{\theta}\right) \\
	\leq & 3BLD^{\frac{\alpha}{2}}T^{1-\alpha \gamma}+\dfrac{A}{1+\theta}T^{1+\theta}.
	\end{align}
\end{proof}
The over all regret is now bounded by applying the above Lemmas. 

\begin{proof}[Proof of Proposition \ref{theo:upper_bound}]
	First, let $K(t)=t^{z}\log(t)$ and $h_{T}=\lceil T^{\gamma} \rceil$, where $0<z<1$ and $0<\gamma<\frac{1}{D}$; let $H(t) = Bt^{-z/2}$; let the condition $2H(t)+2BLD^{\frac{\alpha}{2}}T^{-\alpha \gamma} \leq At^\theta$ be satisfied for all $1<t<T$. Combining the results of above Lemmas, the regret $R(T)$ is bounded by
	\begin{align*}
	R(T) \leq & (1+\eta B)2^D\left(T^{z+\gamma D}\log(T) + T^{\gamma D}\right) \\ 
	& + (1+\eta B)\left(\sum_{k=1}^B{V^{\max} \choose k}\right) \frac{\pi^2}{3} \\
	& + 3BLD^{\frac{\alpha}{2}}T^{1-\alpha \gamma}+\dfrac{A}{1+\theta}T^{1+\theta}
	\end{align*}
	The summands contribute to the regret with leading orders $O(\log(T)T^{z+\gamma D})$, $O(T^{1-\gamma\alpha})$ and $O(T^{1+\theta})$. In order to balance the leading orders, we select the parameters $z, \gamma, A, \theta$ as following values $z=\frac{2\alpha}{3\alpha+D}\in (0,1), \gamma=\frac{z}{2\alpha}\in(0,\frac{1}{D}), \theta=-\frac{z}{2}$, and $A=2B + 2BLD^{\alpha/2}$. Note that the condition \eqref{eq:condition_E3} is satisfied with these values. The the regret $R(T)$ reduces to
	\begin{align*}
	R(T) \leq & (1+\eta B)2^{D}\left(\log(T)T^{\frac{2\alpha+D}{3\alpha+D}} + T^{\frac{D}{3\alpha+D}}\right)  \\
	& +(1+\eta B)\left(\sum_{k=1}^B{V^{\max} \choose k}\right) \frac{\pi^2}{3} \\
	& + \left(3BLD^{\alpha/2}+\frac{2B+2BLD^{\alpha/2}}{(2\alpha+D)/(3\alpha+D)} \right) T^{\frac{2\alpha+D}{3\alpha+D}}
	\end{align*}
	Then the leading order is $O\left( (1+\eta B)2^D  T^{\frac{2\alpha+D}{3\alpha+D}}\log(T)\right)$.
\end{proof}

\section{Proof of Proposition \ref{theo:upper_bound_DF}}\label{proof:theo:upper_bound_DF}
\begin{proof}
	We only consider the regret incurred by the mis-exploitation. In the mis-exploitation phase, we will at least one hypercube $p$ such that $M^{t}(p) < K(t) < C^{t}(p)$. Besides the two counters $M^{t}(p)$ and $C^{t}(p)$, the algorithm also keep a timespan $d^t(p)$, such that after  $d^t(p)$ the observed qualities for hypercube $p$ is larger than $K(t)$.
	
	Since the minimum increment for counters $C^t(p)$ and $M^t(p)$ is 1, we split the task sequence $\{1,2,\dots,T\}$ into $\lceil K(T)\rceil$ segments $\{[t^\prime_i,t^\prime_{i+1})\}_{i=1}^{\lceil K(T)\rceil}, i\in\mathbb{N}^+$. Such that for any $t \in [t^\prime_i,t^\prime_{i+1})$, we have $\lceil K(t) \rceil = i$. 
	Now let us consider the task $t^\prime_i$ and assume the algorithm enters the mis-exploitation for task $t^\prime_i$, i.e. $M^{t^\prime_i}(p) < K(t^\prime_i) \leq i < C^{t^\prime_i}(p)$. Since the each task has a deadline and the maximum feedback delay of the replications for task $t$ is $L^t$, one can easily verify that the largest value that $d^t(p)$ can have is $L^{\max}$ which is the maximum deadline of tasks. Therefore, in the worst case, the tasks arriving after $t^\prime_i$ within $L^{\max}$ will enter the mis-exploitation. Let $\lambda$ be the task arrival rate, then the expected regret incurred by mis-exploitation for the $i$-th segment is $\lambda L^{\max} (1+\eta B)$ where $(1+\eta B)$ is the maximum regret for one task. Notice that if $\lambda L^{\max} > t^\prime_{i+1} - t^\prime_{i}, \forall i$, then the upper bound of mis-exploitation regret becomes $(1+\eta B)T$ which grows linearly with the task number $T$. However, based on the design of the control function $K(t)$, we know that when $i\to\infty$ the number of tasks in one segment $t^\prime_{i+1} - t^\prime_{i} \to \infty$. Therefore we must have $\lambda L^{\max} < t^\prime_{i+1} - t^\prime_{i}$ when $i$ is larger than a certain number $i^\prime$. Since there are a total of $\lceil K(T) \rceil$ segments and $K(t) = t^{\frac{2\alpha}{3\alpha + D}}\log(t)$, the regret of mis-exploitation is 
	\begin{align*}
	R_m (T) &=  \lambda L^{\max} (1+\eta B)\lceil K(T) \rceil \leq \lambda L^{\max} (1+\eta B)(T^{\frac{2\alpha}{3\alpha + D}}\log(T) + 1)
	\end{align*}
\end{proof} 
\end{document}